\documentclass[prodmode,permissions]{acmsmall-ec16}
\pdfoutput=1
% Package to generate and customize Algorithm as per ACM style
\usepackage{amsmath,amsfonts,amssymb,bbm,graphics} 
\usepackage[numbers]{natbib} % for citet
\usepackage{hyperref}
\usepackage[ruled]{algorithm2e}

\usepackage{color}

\SetArgSty{textrm}  % for algorithm2e
\SetAlFnt{\small}
\SetAlCapFnt{\small}
\SetAlCapNameFnt{\small}
\SetAlCapHSkip{0pt}
\IncMargin{-\parindent}

%\conferenceinfo{EC'16,}{July 24--28, 2016, Maastricht, The Netherlands.}
%\CopyrightYear{2016}
%\crdata{978-1-4503-3936-0/16/07}
%\copyrighttext{Copyright is held by the owner/author(s). Publication rights licensed to ACM.}

\doi{XXXXXXX.XXXXXXX}% TeXSupport

% Document starts
\begin{document}

% Page heads
%\markboth{G. Zhou et al.}{A Multifrequency MAC Specially Designed for WSN Applications}

% Title portion
\title{Approximately Envy-Free Spectrum Allocation with Complementarities} 
\author{Dengwang Tang
\affil{University of Michigan}
Vijay Subramanian
\affil{University of Michigan}
}
% NOTE! Affiliations placed here should be for the institution where the
%       BULK of the research was done. If the author has gone to a new
%       institution, before publication, the (above) affiliation should NOT be changed.
%       The authors 'current' address may be given in the "Author's addresses:" block (below).
%       So for example, Mr. Abdelzaher, the bulk of the research was done at UIUC, and he is
%       currently affiliated with NASA.

\begin{abstract}
%Before Abstract is written I first put some place holder here.
%\lipsum[7]
With spectrum auctions as our prime motivation, in this paper we analyze combinatorial auctions where agents' valuations exhibit complementarities. Assuming that the agents only value bundles of size at most $k$ and also assuming that we can assess prices, we present a mechanism that is efficient, approximately envy-free, asymptotically strategy-proof and that has polynomial-time complexity. Modifying an iterative rounding procedure from assignment problems, we use the primal and dual optimal solutions to the linear programming relaxation of the auction problem to construct a lottery for the allocations and to assess the prices to bundles. The allocations in the lottery over-allocate goods by at most $k-1$ units, and the dual prices are shown to be (approximately) envy-free irrespective of the allocation chosen. We conclude with a detailed numerical investigation of a specific spectrum allocation problem. 
\end{abstract}

\begin{CCSXML}
<ccs2012>
<concept>
<concept_id>10003752.10010070.10010099.10010101</concept_id>
<concept_desc>Theory of computation~Algorithmic mechanism design</concept_desc>
<concept_significance>500</concept_significance>
</concept>
<concept>
<concept_id>10003752.10010070.10010099.10010107</concept_id>
<concept_desc>Theory of computation~Computational pricing and auctions</concept_desc>
<concept_significance>500</concept_significance>
</concept>
<concept>
<concept_id>10003752.10003809.10003636.10003813</concept_id>
<concept_desc>Theory of computation~Rounding techniques</concept_desc>
<concept_significance>100</concept_significance>
</concept>
</ccs2012>
\end{CCSXML}

\ccsdesc[500]{Theory of computation~Algorithmic mechanism design}
\ccsdesc[500]{Theory of computation~Computational pricing and auctions}
\ccsdesc[100]{Theory of computation~Rounding techniques}

%\terms{Design, Algorithms, Performance}

%\keywords{Wireless sensor networks, media access control, multi-channel, radio interference, time synchronization}

\keywords{Combinatorial auctions, Complementaries, Supporting prices, Walrasian prices, Envy-free pricing mechanism}

%Combinatorial Auction, Supporting prices, 
%Walrasian Price, Envy-free Pricing Mechanism,

\begin{bottomstuff}
This work is supported by the National Science Foundation under grant
AST-1343381 and the SURE (Summer Undergraduate Research Experience) program at the University of Michigan. The authors would like to thank Rakesh Vohra and Thanh Nguyen for comments on an earlier draft. %See the
%Acknowledgements section before REFERENCES.

Author''s addresses: D. Tang and V. G. Subramanian, Electrical Engineering and Computer Science Department, University of Michigan; email: \url{{dwtang,vgsubram}@umich.edu}.
\end{bottomstuff}

\maketitle

\section{Introduction}

Market design is widely applied to many real-world problems that have interesting underlying resource allocation questions \cite{nisan2007algorithmic,krishna2009auction,easley2010networks}. Most of the problems involve allocation of indivisible goods where the preference exhibit complementarities with preferences over bundles of goods, and also externalities. Some of these problems, such as matching of residents to hospitals, matching of students to schools, or kidney exchanges, either explicitly bar the use of monetary transfers or prices to facilitate market-making, or cannot use prices owing to non-numeraire preferences. In many others problems, such as online sponsored search auctions, market clearing in electricity markets, or spectrum auctions, bids and prices obtained via combinatorial auctions \cite{cramton2006combinatorial,blumrosen2007combinatorial,krishna2009auction} are the mainstay of the underlying market-making. Despite the wide-applicability of combinatorial auctions in the latter class of problems, it is well-understood that without any restrictions on the agent utilities, the problems are computationally intractable. In this paper, we focus on the generic combinatorial auction problem, and under specific restrictions on the agent utilities present a randomized mechanism with polynomial-time complexity that ensures \emph{ex-post} approximate envy-freeness and asymptotic strategy-proofness. 
%~~~~The problem of resource allocation exists in different fields, such as allocating spectrum bands to companies, allocating couples to hospital residency positions, and allocating courses to students. In this paper, we focus on the problem of assigning bundles of indivisible goods to agents ensuring \emph{ex post} approximate envy-freeness and asymptotic strategy-proofness.

Our main motivation for this work is spectrum auctions and markets \cite{berry2010spectrum,milgrom1998game,bulow2009winning,cramton1997fcc,cramton2002spectrum}. Owing to interference considerations, it is easily seen that agents (service providers) obtain higher total utilities for allocations when bands are adjacent either in frequency or space as opposed to when the bands are separately allocated. Given the increased demand and utilization of the airwaves, many governments have successfully conducted auctions to license spectrum bands for use by commercial service providers, and sometimes via open-access. As many more auctions are expected in the future, understanding classes of utilities for which there exist optimal or almost optimal mechanisms with polynomial-time complexity is an important area of research.
%For the problem of spectrum band allocation, the agents are interested in getting bundles of bands rather than thinking of every band independently. For these agents, the utility of several adjacent band is usually larger than the sum of utility of the individual band. That is to say, the utility function of bands for an agent is non-linear. Therefore, we would consider a bundle as a whole in this problem.

One of the principal reasons for the intractability of the combinatorial auction problem is that it includes the knapsack problem with the additional complexity of having an exponential (in the number of goods being auctioned) number of integer variables. In addition, even the linear programming relaxation is a hard problem as the number of variables is still exponential in the number of goods. There is considerable research on polynomial-time approximation algorithms in this context \cite{nisan2000bidding,zurel2001efficient,bartal2003incentive,blumrosen2007combinatorial,mu2008truthful,dobzinski2012truthful,dobzinski2015multi,nisan2007computationally}. Starting with the assumption of a single-minded\footnote{Different agents are identified via specific bundles, say $S_i$ for agent $i$, and they have a positive constant utility only for all bundles that contain $S_i$.} buyer, the authors in \cite{blumrosen2007combinatorial} present a greedy, constant-factor, polynomial-time, and strategy-proof approximation mechanism for this problem that solicits bids and determines prices for the agents who get allocated bundles. This approximation mechanism is then generalized to larger class of utilities that can be obtained from the single-minded buyer setting using elementary operations. While this scheme provides a mechanism with many good properties, it is not guaranteed to be efficient. With the same family of utilities, if the linear-programming relaxation yields an integer solution \cite{blumrosen2007combinatorial}, then the Second Welfare Theorem insists that the dual variables can be used to determine Walrasian market clearing prices \cite{babaioff2014efficiency,blumrosen2007combinatorial,vohra2012principles} to be the assessed to agents that get allocated bundles; the linear programming relaxation can be solved in polynomial-time with the family of utilities considered. Complementary slackness also obtains \emph{envy-freeness} \cite{blumrosen2007combinatorial,vohra2012principles}, wherein no agent gets a higher return for the allocation of any other agent, and hence doesn't envy it. Furthermore, it is easily shown that in the presence of many agents, no agent gains much by being untruthful about their valuations, i.e., asymptotic \emph{strategy-proofness} also obtains. These results will be important precedents that will be one part of the related work.

Some recent developments for matching and assignment problems with complementarities, where  either the valuations are non-numeraire or where prices cannot be assessed, are also important precedents for our work. In \cite{nguyen2015assignment} and \cite{nguyen2014near} the authors consider one-sided and two-sided matchings with complementarities. An important restrictions on the utilities that they impose is to assert that agents do not value bundles of size (number of good in the bundle include multiplicities) greater than $k$ (a parameter); just as in the single-minded buyer setting, the valuation of the larger bundles can equivalently be set to the maximum of the bundles contained within. The mechanisms developed then solve the linear programming relaxation with envy-freeness explicitly accounted for as a constraint (because prices cannot assessed in such problems). The key innovation is to then present a polynomial-time integer-rounding-based lottery procedure such that the linear programming optimal solution is in the convex-hull of the integer solutions with the added property that none of the integer solutions exceed the supply constraints by more than $k-1$ units. Note that efficiency is guaranteed as the expected utility is exactly that obtained from the solution of the linear programming relaxation.

In this paper we adopt the $k$-sized bundles restrictions on the utilities from \cite{nguyen2015assignment} and \cite{nguyen2014near}, and ask whether there exists a polynomial-time efficient randomized mechanism when the valuations are numeraire, and agents have quasilinear utilities. The key difference in our problem is the ability to charge prices, and so we further look for a mechanism that \emph{a'la} Walrasian prices naturally obtain envy-freeness, instead of imposing it as a constraint in the linear programming relaxation. As mentioned earlier, spectrum auctions are one of the main motivations for us to study such mechanisms, and there are policy guidelines that are being discussed for the upcoming incentive auctions by the FCC where certain players like AT\&T and Verizon will have restrictions on the bands that they can bid on \cite{cramton2007700,shapiro2014economic}. Our $k$-bundle constraint on the utilities is a natural form of such restrictions.

%Another feature of the problem is that we need the allocation to be relatively fair for the agents. To qualitatively define "fairness", we use the concept of \emph{envy-freeness} \cite{nguyen2015assignment}, which means that any agent would prefer her bundle to other agents' bundles under the given price of goods. Under this criteria, we can say that the allocation is relatively fair.

%We also require the allocation to be relatively \emph{efficient}, which means that the social welfare (the sum of utility of the agents) under the final allocation would be maximized. However, finding an exactly optimal allocation is one kind of knapsack problem, which is NP hard. Therefore we first relax the constraints to make the problem a linear programming (LP) problem. After solving the LP, we may get a fractional solution. We then perform the lottery construction process described by Nguyen et al \cite{nguyen2015assignment}, and then get the allocation scheme which in average gives results close to the LP optimal solution.

%For the allocation we get, we will also find a supporting price associated with the allocation. Under the supporting price, all the goods would be cleared and every agent would get a bundle with approximately the best payoff.

Our mechanism for allocation, called the POPT (Priced OPT) mechanism, starts with the linear programming relaxation of the allocation problem with just the demand and supply constraints; we call this problem by LIP. We solve LIP via the simplex method to obtain an extreme point optimal and the corresponding Lagrange multipliers (dual optimal). We then modify the integer-rounding procedure in \cite{nguyen2015assignment} such that each of the integer solutions is the optimal solution of a related linear programming problem such that the dual optima include the dual optima for LIP. This is an important modification that allows us to construct the prices for our mechanism. We believe that this idea can be used in contexts. We then follow the lottery construction procedure from \cite{nguyen2015assignment} with a few small modifications. We prove the (approximate) envy-freeness of our mechanism first by showing a market-clearing property of our prices that we call as supporting the allocation, and then by using complementary slackness we demonstrate envy-freeness. 

The paper is organized as follows. In Section~\ref{sec:mechsetup} we describe the POPT mechanism and prove many properties of it, including approximate envy-freeness. We analyze the performance of the mechanism in Section~\ref{sec:analysis}. We then briefly describe the open-source implementation of the mechanism in Section~\ref{sec:implement} and conclude in Section~\ref{sec:conclusion}.

%Our mechanism for allocation, called POPT (Priced OPT), is constructed based on OPT Mechanism described in \cite{nguyen2015assignment}. OPT Mechanism is also an approximately efficient mechanism for bundle allocation problems. However, there are some important difference between the two mechanisms. While the OPT mechanism adds envy-freeness as an explicit constraint in optimization, POPT introduce prices for goods so that approximate envy-freeness can be achieved without adding an explicit constraint. Another difference is that, we modify the iterative rounding procedure a little bit so that we can make sure that any allocation that may be given by POPT is an approximately \emph{envy-free} solution.

% Head 1

\section{Mechanism Setup}\label{sec:mechsetup}
We start by describing the mechanism. Thereafter, we elaborate on each of the steps of the mechanism and prove properties of it.

For the POPT mechanism, we will use the following procedure to get an approximate efficient allocation of the goods. \\

\noindent\textbf{\underline{POPT mechanism:}}
\begin{enumerate}
	\item Set up initial linear programming problem (LIP).
	\item Solve (LIP) and get solution $x^*$.
	\item Perform Lottery Construction process on $x^*$ to get integral solutions
	\item Construct a lottery of the integral solutions that (approximately) has the expected solution being $x^*$.
	\item Solve the dual of (LIP) to get POPT prices that support the allocation.
\end{enumerate}

% Head 2
\subsection{Initial Linear Programming Problem}

We assume that we have a set $N$ of agents and set $G$ of good types. For every type $j$ of good, we have the supply for the good to be $s_j$, where all goods of a given type are identical. A bundle is denoted by a vector $B\in\mathbb{Z}_+^{|G|}$ where the j-th coordinate $B_j$ denotes the number of good of type $j$ in bundle $B$. If $B_j^1\leq B_j^2$ for all $j\in G$ and $B^1\neq B^2$, we say that $B^1\prec B^2$. Denote the set of all available bundles as $\mathcal{B}$. Define the function $x_i:\mathcal{B}\mapsto \{0,1\},i\in N$ as the indicator variable for the allocation of bundle $B$ to agent $i$. Therefore, $x_i(B)=1$ means that bundle $B$ is allocated to agent $i$. Function $u_i:\mathcal{B}\mapsto \mathbb{R},i\in N$ determines  the valuations of each agent for all possible bundles, i.e. $u_i(B)=p$ means that bundle $B$ has value $p$ for agent $i$. 

To make this problem solvable in polynomial time, we assume that the agents are only interested in bundles with size less than or equal to $k$. In other words, the valuation can only be positive for bundles $B$ such that $$\sum_{j\in G}B_j\leq k~~~~~~ \forall B\in\mathcal{B}.$$ Therefore, for bundles $B$ with size larger than $k$ we fix $x_i(B)=0$; alternatively we can follow the convention for single-mind buyers and set the valuations of a bundle with size greater than $k$ to be the maximum of the valuation of its subsets. In either case, then we can  reduce the set $\mathcal{B}$ to contain only the \emph{k-bundles}, which denotes the bundles with size less than or equal to $k$. %Hence we have reduced the number of variables for the problem.
Furthermore, we relax $x_i(B)$ to take values in $[0,1]$. This way the problem of solving $x_i(B)$ can be formulated as a linear programming problem on a convex set.

The demand constraints for the allocation insist that every agent gets at most one bundle. These are given by
\begin{align*}
\sum_{B\in\mathcal{B}}x_i(B)\leq 1~~~~~~\forall i\in N.\tag{Demand}
\end{align*}
% Future improvement: more than one bundle

The supply constraints ensure that the goods are not over-allocated. They are given by
\begin{align*}
\sum_{i\in N}\sum_{B\in \mathcal{B}}B_jx_i(B)\leq s_j~~~~~~\forall j\in G.\tag{Supply}
\end{align*}

The objective function to be maximized is the total utility of the agents. Therefore the initial linear programming problem can be formulated as
\begin{align*}
\begin{array}{ccll}
&\displaystyle\max_{x\geq 0}&\displaystyle\sum_{i\in N}\sum_{B\in\mathcal{B}}u_i(B)x_i(B)&\\
&\mathrm{s.t.}&\displaystyle\sum_{B\in\mathcal{B}}x_i(B)\leq 1&\forall i\in N,\\
&&\displaystyle\sum_{i\in N}\sum_{B\in \mathcal{B}}B_jx_i(B)\leq s_j&\forall j\in G.
\end{array}
\end{align*}

We make some modifications to the coefficients of the problem, and we get
\begin{align}
\begin{array}{ccll}
&\displaystyle\max_{x\geq 0}&\displaystyle\sum_{i\in N}\sum_{B\in\mathcal{B}}w_i(B) u_i(B)x_i(B)&  \\
&\mathrm{s.t.}&\displaystyle\sum_{B\in\mathcal{B}}x_i(B)\leq 1&\forall i\in N,\\
&&\displaystyle\sum_{i\in N}\sum_{B\in \mathcal{B}}B_jx_i(B)\leq \tilde{s}_j:=s_j-\epsilon_j&\forall j\in G.
\end{array}
\tag{LIP} 
\end{align}
where $w_i(B)$ are weights which typically takes values near 1 and $\epsilon_j\ll 1$ are modification variables. For our mechanism, we choose $w_i$ and $\epsilon_j$ randomly, where $w_i(B)$ are drawn i.i.d. uniformly on $[1-\delta_w, 1+\delta_w]$, and $\epsilon_j$ are drawn i.i.d. uniformly on $[\delta_\epsilon, 2\delta_\epsilon]$. Here $\delta_w\ll 1$ and $\delta_\epsilon\ll 1$ are preset values. The reason for this is to ensure \emph{asymptotic strategy-proofness}, and this will be discussed in the Section~\ref{sec:analysis}. 

To solve the problem, we first solve (LIP) and get a solution $x^*$, which is most likely to be fractional. If it is fractional, we run the lottery construction process using the solution $x^*$ to get a few integral solutions to form a lottery to determine the resulting allocation.

% Head 3
\subsection{Obtaining Integral Solutions}

For the lottery construction we need a subroutine called Iterative Rounding (IR). Given any reward vector $c$, the Iterative Rounding procedure basically takes any point $z$ which satisfies (Demand) and (Supply) as input, and outputs an integral $\overline{x}$. This procedure can be denoted as a function $\overline{x}=\mathrm{IR}(z)$ or $\overline{x}=\mathrm{IR}(z;c)$.

\subsubsection{Iterative Rounding}
\label{sec: itr}

For the Iterative Rounding procedure, we basically follow the procedure described in \cite{nguyen2015assignment} but we also make some modifications. For any reward vector $c\in \mathbb{R}^{|N|\times|\mathcal{B}|}$ which has the same size as the input $z$, the procedure 
%is as follows
is as described in Algorithm \ref{alg:ir}.

\begin{algorithm}[!ht]
\SetAlgoNoLine
\KwIn{A point $z$ which satisfies (Demand) and (Supply)}
\KwOut{An integral point $z^{(\tau)}$ which satisfies (Demand) and (Supply+k-1)}
$G^{(0)}=G,~\mathcal{B}_i^{(0)}=\mathcal{B}~~~\forall i\in N,~\tilde{s}_j^{(0)}=\tilde{s}_j~~~\forall j\in G,~ z^{(\tau)}=z, \tau=0$\;

$N^{(\tau)}=\{i\in N:\sum_{i\in N}z_i(B)=1\}$\;

\Repeat{Forever}{
		\If{$z^{(\tau)}$ is integral}{\textbf{break}}
		\If{some but not all of $z^{(\tau)}_i(B)$ are integral}{
			
			$\mathcal{B}_i^{(\tau+1)} = \{B\in\mathcal{B}:0<z_i^{(\tau)}(B)< 1\}$\;
			
			$\tilde{s}_j^{(\tau+1)}=s_j'^{(\tau)}-\sum_{i\in N}\sum_{B\in \mathcal{B}_i^{(\tau)}\backslash\mathcal{B}_i^{(\tau+1)}}B_j z_i^{(\tau)}(B)~~~~~~\forall j\in G^{(\tau)}.$\;
		
		The updated linear programming problem is then 
		\begin{align}
		\begin{array}{ccll}
		&\displaystyle\max_{x\geq 0}&\displaystyle\sum_{i\in N}\sum_{B\in\mathcal{B}^{(\tau+1)}_i}c_i(B)x_i(B)&\\
		&\mathrm{s.t.}&x_i(B)=z_i^{(\tau)}(B)&\forall i\in N,B\in\mathcal{B}\backslash\mathcal{B}_i^{(\tau+1)}\\
		&&\displaystyle\sum_{B\in\mathcal{B}^{(\tau+1)}_i}x_i(B)= 1&\forall i\in N^{(\tau)},\\
		&&\displaystyle\sum_{B\in\mathcal{B}^{(\tau+1)}_i}x_i(B)\leq 1&\forall i\in N\backslash N^{(\tau)},\\
		&&\displaystyle\sum_{i\in N}\sum_{B\in \mathcal{B}^{(\tau+1)}_i}B_jx_i(B)\leq \tilde{s}_j^{(\tau+1)}&\forall j\in G^{(\tau)}.
		\end{array}
		\tag{ULIP}
		\end{align}
		
		Solve (ULIP) to get an \emph{extreme point} solution $z^*$, set $z^{(\tau+1)}=z^*$\;}
		\If{none of $z_i^{(\tau+1)}(B)$ is integral}{
		$G^{(\tau+1)}=\{j\in G^{(\tau)}: \sum_{i\in N}\sum_{B\in \mathcal{B}_i^{(\tau+1)}}B_j> \lceil \tilde{s}^{(\tau+1)}_j\rceil+k-1\}$\;
		
		\textit{Comment: By \cite{nguyen2015assignment} there must exist some $j \in G^{(\tau)}$ such that $
		\sum_{i\in N}\sum_{B\in \mathcal{B}_i^{(\tau+1)}}B_j\leq \lceil \tilde{s}^{(\tau+1)}_j\rceil+k-1.
		$}
		}
		
		$\tau=\tau+1$.
}
\caption{Iterative Rounding}
\label{alg:ir}
\end{algorithm}

The main difference between our proposed iterative rounding procedure and that in \cite{nguyen2015assignment} lies in the second step where we set $N^{(\tau)}$. In our procedure, we preserve the equality constraints so that we can ensure that the set of active demand 
%{\color{red}(It seems like we are keep demand constraints and throwing out supply constraints.)} {\color{blue} (Here I mean that the demand constraints that are active at $z$ is also active at $z^{(\tau)}$)} 
constraints for input $z$ is a subset of that of output $z^{(\tau)}$. Therefore we can ensure that the Lagrange multipliers of the original problem (LIP) also apply to $z^{\tau}$. This will be discussed further in Theorem \ref{thm: main} in Section \ref{sec: lc}. It then promotes the existence of \emph{supporting prices} for the allocation scheme $z^{(\tau)}$, which would be discussed in Theorem \ref{thm:supporting_price}.

It is easy to see that the procedure would finish in polynomial time, since in each iteration, either at least one variable is eliminated or at least one constraint is eliminated. So the number of rounds cannot be larger than sum of the number of constraints and the number of variables.

Define the approximate supply constraint
\begin{align*}
\sum_{i\in N}\sum_{B\in \mathcal{B}}B_jx_i(B)\leq s_j+k-1~~~~~~\forall j\in G.\tag{Supply+\emph{k}-1}
\end{align*}

We have the following results.

%\textbf{Theorem 2.1}
\begin{theorem}
	\label{thm: 2.1}
 For any reward vector $c$ and any vector $z$ that satisfies (Supply), we have $\overline{z}=\mathrm{IR}(z; c)$ to satisfy (Supply+k-1) and $c^T\overline{z}\geq c^Tz$.
\end{theorem}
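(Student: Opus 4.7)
The plan is to establish the two assertions---monotonicity of $c^Tx$ and feasibility with respect to (Supply+\emph{k}-1)---separately, in both cases by induction on the iteration counter $\tau$ of Algorithm~\ref{alg:ir}. The main tool is the bookkeeping identity $\tilde{s}_j^{(\tau+1)}=\tilde{s}_j^{(\tau)}-\sum_{i,B\in\mathcal{B}_i^{(\tau)}\setminus\mathcal{B}_i^{(\tau+1)}}B_jz_i^{(\tau)}(B)$, which tracks how much of the initial supply $s_j-\epsilon_j$ has been committed to rounded-to-integer allocations by iteration $\tau$; telescoping it implies that at any iteration the fixed contribution to good $j$ equals $s_j-\epsilon_j-\tilde{s}_j^{(\tau)}$.

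For the monotonicity claim $c^T\overline{z}\geq c^Tz$, I would first verify that whenever the algorithm updates $z^{(\tau)}$ to $z^{(\tau+1)}$ via (ULIP), the incumbent $z^{(\tau)}$ is itself feasible for that (ULIP). The fixed-value constraints $x_i(B)=z^{(\tau)}_i(B)$ for $B\notin\mathcal{B}_i^{(\tau+1)}$ hold trivially; the demand constraints hold as equality for $i\in N^{(\tau)}$ by definition and as inequality otherwise; and the reduced supply constraint on $\mathcal{B}_i^{(\tau+1)}$ holds precisely by the bookkeeping identity. Since $z^{(\tau+1)}$ is optimal and $z^{(\tau)}$ feasible for (ULIP), and since the (ULIP) objective differs from $c^Tx$ only by the contribution of the fixed variables---which is identical at $z^{(\tau)}$ and $z^{(\tau+1)}$ thanks to the fixed-value constraints---one obtains $c^Tz^{(\tau+1)}\geq c^Tz^{(\tau)}$. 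Iterations that only shrink $G^{(\tau)}$ do not change $z$ at all, so the inequalities telescope to give $c^T\overline{z}\geq c^Tz$.

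For the feasibility claim, fix $j\in G$. If $j$ is never dropped, then at termination the supply constraint for $j$ has been in force throughout, and combined with the bookkeeping identity above it yields $\sum_{i,B}B_j\overline{z}_i(B)\leq s_j-\epsilon_j\leq s_j$. Otherwise, let $\tau_j$ be the first iteration at which $j$ leaves $G^{(\tau)}$; by the condition for removal, $\sum_{i,B\in\mathcal{B}_i^{(\tau_j)}}B_j\leq \lceil\tilde{s}_j^{(\tau_j)}\rceil+k-1$, so even if every remaining variable is eventually rounded up to $1$ the leftover contribution to good $j$ is bounded by $\lceil\tilde{s}_j^{(\tau_j)}\rceil+k-1$; adding the already-fixed contribution $s_j-\epsilon_j-\tilde{s}_j^{(\tau_j)}$ gives a total allocation bound of $s_j-\epsilon_j+\bigl(\lceil\tilde{s}_j^{(\tau_j)}\rceil-\tilde{s}_j^{(\tau_j)}\bigr)+k-1\leq s_j+k-\epsilon_j$.

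The main obstacle I expect is closing the $\epsilon_j$-gap between this $s_j+k-\epsilon_j$ bound and the integer target $s_j+k-1$: this is precisely where the perturbation $\epsilon_j\in[\delta_\epsilon,2\delta_\epsilon]$ built into (LIP) pays off, because $\overline{z}$ is integral and every $B_j$ and $s_j$ is an integer, so the left-hand side of the bound is itself an integer, and the strict positivity of $\epsilon_j$ forces it to be at most $s_j+k-1$. I expect this ceiling-plus-perturbation accounting to be the most delicate step, together with the prerequisite that the algorithm actually terminates with an integral output---the latter follows from the existence statement of \cite{nguyen2015assignment} quoted inside Algorithm~\ref{alg:ir}, which guarantees that at each iteration some $j\in G^{(\tau)}$ satisfies the dropping condition, so progress is always made and both the variable count and the constraint count strictly decrease.
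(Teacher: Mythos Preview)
Your proposal is correct. The paper itself does not give a proof of this theorem at all---it simply cites Appendix~B of \cite{nguyen2015assignment}---so there is no in-paper argument to compare against; your sketch is essentially the standard iterative-rounding analysis that the cited reference contains.

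One small simplification: the ``$\epsilon_j$-gap'' you flag as the delicate step actually closes exactly, without the integrality argument. Every variable that gets fixed is fixed to $0$ or $1$ and every $B_j$ is an integer, so each bookkeeping update subtracts an integer from $\tilde{s}_j^{(\tau)}$; hence $\tilde{s}_j^{(\tau)}$ retains the fractional part of $\tilde{s}_j^{(0)}=s_j-\epsilon_j$ throughout, giving $\lceil\tilde{s}_j^{(\tau_j)}\rceil-\tilde{s}_j^{(\tau_j)}=\epsilon_j$ on the nose and a total bound of $s_j-\epsilon_j+\epsilon_j+k-1=s_j+k-1$ directly. Your integrality workaround is valid too, just slightly less sharp. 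Also a minor phrasing slip in your termination remark: in each iteration \emph{either} a variable \emph{or} a constraint is eliminated, not both---this is exactly what the paper states just after Algorithm~\ref{alg:ir}.
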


\begin{proof}
See Appendix B in \cite{nguyen2015assignment}.
\end{proof}

From Theorem \ref{thm: 2.1}, we know that for any $z$ that satisfies (Demand) and (Supply) and any vector $c$, we can always find out an integral $\overline{z}$ (using Iterative Rounding Algorithm) which satisfies (Demand) and (Supply+\emph{k}-1) such that $c^T\overline{z}\geq c^Tz$.

\subsection{Lottery Construction}
\label{sec: lc}

For the Lottery Construction part, we also basically follow the algorithm described by \cite{nguyen2015assignment}, but here again we make some modifications. 

The procedure is as described in Algorithm \ref{alg: lc}.

\begin{algorithm}[!ht]
	\SetAlgoNoLine
	\KwIn{Fractional Optimal Solution $x^*$}
	\KwOut{Set $F$ of integral points that satisfy (Demand) and (Supply+$k$-1).}
	$c_{i}^{(1)}(B)=w_i(B)u_i(B),\forall B\in\mathcal{B}$\;
	
	$x'=\mathrm{IR}(x^*;c^{(1)})$\;
	
	$c_{i}^{(2)}(B)=-w_i(B)u_i(B),\forall B\in\mathcal{B}$\;
	
	$x'=\mathrm{IR}(x^*;c^{(2)})$\;
	
	$F^{(0)}=\{x',x''\}$, $\tau=0$\;
	
	$\varepsilon$ is a preset allowable error\;
	
	$\delta_z=\dfrac{\delta_\epsilon}{\sqrt{|N|\sum_{B\in\mathcal{B}}B_j^2}}$ 	
	for some $j\in G$\;
	
	\textit{Comment: Note that the value of $\delta_z$ is independent of $j$, since $\mathcal{B}$ is the set of all $k$-bundles and it is symmetric with regard to any $j\in G$.}
	
	\Repeat{Forever}{
		$y^*=\arg\min_y\{\|y-x^*\|:$ $y$ lies in the convex hull $E$ of all points in $F^{(\tau)}\}$\; (Quadratic Programming)
		
		\eIf{$\|y^*-x^*\|<\varepsilon$}{\textbf{break}}
		{
		($y^*$ may lie on a surface of the convex hull $E$. There exist a minimal subset $F'\subseteq F^{(\tau)}$ such that $y\in \mathrm{int}E'$  where $E'$ is the convex hull of $F'$.)
		
		$F^{(\tau+1)}=F'$\;
		
		$z=x^*+\delta_z\frac{x^*-y^*}{\|x^*-y^*\|}$. %Apply Iterative Rounding Algorithm on $z$ given reward vector $c=x^*-y^*$. After that we get $\overline{z}$ such that $\overline{z}^T(x^*-y^*)\geq z^T(x^*-y^*)$.
		
		$\overline{z}=\mathrm{IR}(z, x^*-y^*)$\;
		
		$F^{(\tau+1)} = F^{(\tau+1)}\cup \{\overline{z}\}$\;}
		
		$\tau=\tau+1$
	}
	\caption{Lottery Construction}
	\label{alg: lc}
\end{algorithm}

The main difference between our proposed procedure and that of \cite{nguyen2015assignment} lies in the initialization steps. In the two steps we initialize a non-empty set $F^{(0)}$ so that the quadratic programming problem in the first step in the loop would always have a solution.

The authors of \cite{nguyen2015assignment} proved that the above algorithm terminates in polynomial time. After the algorithm terminates, we get a the set of (integral) points $F$, where the fractional solution $x^*$ of (LIP) is contained in the convex hull of all points in $F=\{x^1, x^2,\cdots, x^l\}$. The coefficients $\lambda_i$ where
$
x^*=\sum_{i=1}^l \lambda_i x^i,~\sum_{i=1}^l\lambda_i=1
$
is also calculated from the quadratic programming problem in the loop. Then we can randomly select one vector $x^i$ from set $F$ (with probability $\lambda_i$) as the final allocation.

%\textbf{Theorem 2.2}
\begin{theorem}
\label{thm: main}
Suppose $x^*$ is a solution for (LIP). Any $\overline{x}\in F$ is an optimal solution to the following problem.

\begin{align}
\begin{array}{ccll}
&\displaystyle\max_{x\geq 0}&\displaystyle\sum_{i\in N}\sum_{B\in\mathcal{B}}w_i(B)u_i(B)x_i(B)&\\
&\mathrm{s.t.}&\displaystyle\sum_{B\in\mathcal{B}}x_i(B)\leq 1&\forall i\in N,\\
&&\displaystyle\sum_{i\in N}\sum_{B\in \mathcal{B}}B_jx_i(B)\leq \overline{s}_j&\forall j\in G,
\end{array}
\tag{MLIP}
\end{align}

where
\begin{align*}
\overline{s}_j:=\begin{cases}
\tilde{s}_j&\displaystyle\sum_{i\in N}\sum_{B\in \mathcal{B}}B_jx^*_i(B)< \tilde{s}_j ~\mathrm{and}~\displaystyle\sum_{i\in N}\sum_{B\in \mathcal{B}}B_j\overline{x}_i(B)\leq \tilde{s}_j\\
\displaystyle\sum_{i\in N}\sum_{B\in \mathcal{B}}B_j\overline{x}_i(B)&\mathrm{otherwise}
\end{cases}
\end{align*}
for all $j\in G$, and we have $\overline{s}_j\leq s_j+k-1$ for all $j\in G$. Furthermore, any dual solution of (LIP) is also a dual solution of (MLIP).

\end{theorem}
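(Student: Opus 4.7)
I plan to verify optimality of $\overline{x}$ for (MLIP) by LP duality, using any fixed dual optimum $(y^*,p^*)$ of (LIP) as the dual witness. Because (LIP) and (MLIP) share the same objective and constraint matrix and differ only in the supply right-hand sides, dual feasibility in (MLIP) is inherited from dual feasibility in (LIP) without any further argument; this already handles the ``any dual solution of (LIP) is also a dual solution of (MLIP)'' part of the statement once dual optimality is in hand. Primal feasibility of $\overline{x}$ in (MLIP) and the bound $\overline{s}_j\le s_j+k-1$ are direct: on the branch $\overline{s}_j=\tilde s_j=s_j-\epsilon_j$ the defining inequality supplies feasibility and the bound is trivial, while on the branch $\overline{s}_j=\sum_{i,B}B_j\overline{x}_i(B)$ the supply constraint of (MLIP) is tight at $\overline{x}$ and the bound follows from (Supply+$k$-1) via Theorem~\ref{thm: 2.1}.

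The main step is to verify complementary slackness between $\overline{x}$ and $(y^*,p^*)$ relative to (MLIP). The structural fact I plan to leverage is that Algorithm~\ref{alg: lc} terminates with a set $F=\{x^1,\dots,x^l\}$ of integer points and strictly positive weights $\lambda_1,\dots,\lambda_l$ summing to one such that $x^*=\sum_k \lambda_k x^k$. Granted this, the three slackness conditions follow term by term. For demand: if $y^*_i>0$, then by (LIP) slackness $\sum_B x^*_i(B)=1$; since each $x^k$ satisfies (Demand) and all $\lambda_k>0$, the convex combination identity forces $\sum_B\overline{x}_i(B)=1$. For supply: if $p^*_j>0$, then the (LIP) supply inequality is tight at $x^*$, which kills the ``strict inequality'' branch in the definition of $\overline{s}_j$ and leaves $\overline{s}_j=\sum_{i,B}B_j\overline{x}_i(B)$, so the (MLIP) supply constraint is automatically tight. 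For the reduced cost: whenever $\overline{x}_i(B)=1$ the convex combination gives $x^*_i(B)\ge\lambda_{\overline{x}}>0$, and (LIP) slackness at $x^*$ then yields $w_i(B)u_i(B)=y^*_i+\sum_j B_j p^*_j$, as required.

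The main obstacle I expect is justifying the strict-positivity convex combination claim, since it is not stated as an explicit output property of Algorithm~\ref{alg: lc}. I plan to prove it by induction on the outer-loop iterations. After the ``minimal subset'' trimming, the quadratic-programming optimum $y^*$ lies in the relative interior of the convex hull of the trimmed $F'$, so every surviving point appears with strictly positive weight in any representation of $y^*$. The freshly injected $\overline{z}$, produced by iterative rounding on $x^*+\delta_z(x^*-y^*)/\|x^*-y^*\|$ with reward $x^*-y^*$, moves strictly past $x^*$ in the direction $x^*-y^*$ by the objective-improvement bound of Theorem~\ref{thm: 2.1}; hence $x^*$ sits strictly between $y^*$ and $\overline{z}$ on their segment, and composing this with the positive-weight representation of $y^*$ exhibits $x^*$ as a strict convex combination over $F^{(\tau+1)}=F'\cup\{\overline{z}\}$. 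A secondary technicality is the $\varepsilon$-tolerance at termination, which for the exact form of the theorem I will treat as a computational rather than conceptual artifact, absorbable into the small perturbations $\delta_w,\delta_\epsilon$ already present in (LIP).
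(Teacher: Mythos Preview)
Your high-level strategy---verify complementary slackness between $\overline{x}$ and an (LIP) dual optimum $(y^*,p^*)$---matches the paper, and your treatment of supply slackness via the case split in $\overline{s}_j$ and of dual feasibility transfer is exactly right. The gap is in how you obtain demand slackness and reduced-cost slackness. Your argument rests on the claim that at termination $x^*$ is a \emph{strict} convex combination of the points in $F$, and you propose to prove this by induction on the outer loop. That induction is false as stated. In a generic iteration $x^*$ is not in $\mathrm{conv}(F^{(\tau+1)})$ at all; the whole purpose of the lottery loop is to shrink $\|y^*-x^*\|$ over many iterations, so there is no invariant ``$x^*$ is a strict convex combination over $F^{(\tau)}$'' to maintain. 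In particular, your step ``$x^*$ sits strictly between $y^*$ and $\overline z$ on their segment'' does not follow from $\overline z^T(x^*-y^*)\ge z^T(x^*-y^*)$: that inequality only places $\overline z$ on the correct side of a hyperplane, it does not make $y^*$, $x^*$, $\overline z$ collinear. Even at exact termination ($\varepsilon=0$), the closest-point $y^*$ can lie on a face of $\mathrm{conv}(F^{(\tau)})$, so some members of $F$ carry weight zero, and for those $\overline{x}$ your implication $\overline{x}_i(B)=1\Rightarrow x^*_i(B)>0$ and your demand-tightness transfer both collapse.

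The paper avoids this entirely by tracking a different invariant. It defines
\[
S=\Bigl\{z:\ x^*_i(B)=0\Rightarrow z_i(B)=0,\ \ \textstyle\sum_{B}x^*_i(B)=1\Rightarrow\sum_{B}z_i(B)=1,\ \ z\ \text{satisfies (Supply}+k-1)\Bigr\},
\]
and proves (Lemma~A.1) that the modified Iterative Rounding procedure maps $S\cap\{\text{(Supply)}\}$ into $S$. This is precisely why the authors changed the algorithm to freeze tight demand constraints as equalities via $N^{(\tau)}$: that modification guarantees ``$\sum_B z_i(B)=1\Rightarrow \sum_B \mathrm{IR}(z)_i(B)=1$'', which is the demand-slackness transfer you need. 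The lottery loop then preserves $S$ because $y^*$ is a convex combination of points in $S$, the perturbed point $z=x^*+\delta_z(x^*-y^*)/\|x^*-y^*\|$ inherits the zero pattern and tight-demand pattern from $x^*$ and $y^*$, and a Cauchy--Schwarz bound with the specific choice of $\delta_z$ keeps $z$ feasible for (Supply). With $F\subseteq S$ in hand, complementary slackness at $\overline{x}$ follows directly from complementary slackness at $x^*$: $v^*_i(B)>0\Rightarrow x^*_i(B)=0\Rightarrow\overline{x}_i(B)=0$, and $y^*_i>0\Rightarrow\sum_B x^*_i(B)=1\Rightarrow\sum_B\overline{x}_i(B)=1$, with no appeal to convex-combination weights. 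Your plan never invokes the equality-preservation feature of the modified IR, which is the idea doing the real work here.
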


\begin{proof}
See Appendix \ref{app: A}.\qed
\end{proof}

From Theorem \ref{thm: main} we see that if we modify the supply vector from $\tilde{s}_j$ to $\overline{s}_j$, then we have the allocation linear programming problem to have an integral optimal solution. The optimality of the solution for a specified problem ensures that the Lagrange multiplier associated with the original problem (LIP) is also a Lagrange multiplier for (MLIP). This is important in establishing the existence of supporting prices, which will be discussed further in Theorem \ref{thm:supporting_price}. Furthermore, the fact that $\overline{s}_j\leq s_j+k-1$ indicates that for any final allocation scheme, we need no more than $k-1$ additional goods for each type of good to fulfill the allocation just as in \cite{nguyen2015assignment}.

\subsection{POPT Prices}

In this section, we will construct a set of prices from the dual solution of (MLIP) in Theorem \ref{thm: main}. Then we will discover some nice properties of the prices. We will prove that the prices \emph{support} the allocation scheme and are \emph{envy-free}. We will call these prices POPT prices.

\subsubsection{Dual Solution as Supporting Prices}
~~~~After getting the allocation from the above procedure, we would like to find a set of prices for different kind of goods to support the allocation. We would like the prices to have the property that every agent chooses the bundle that yields the best (within some acceptable error) payoff, resulting in the designed allocation.
\iffalse
and the market with $s_j$ be the amount of supply for each good $j\in G$(call it $s_j$-market) would be cleared by the agents without any race between agents, resulting in the designed allocation $\overline{x}$.\cite{easley2010networks}
\fi

\begin{definition}
Let $\epsilon_u$ be the acceptable error\footnote{For example, when people make trades of millions of dollars, they would not care the utility difference of only one dollar. In this case $\epsilon_u=1$ dollar. In general $\epsilon_u$ can be set arbitrarily small.} in utility, and $P(B)$ denote the price of bundle $B$, then a set of prices is said to \emph{support} allocation scheme $\overline{x}$ if for every agent, either the agent chooses some bundle $\hat{B}$ according to $\overline{x}$ and
\begin{align*}
u_i(\hat{B})-P(\hat{B})\geq u_i(B)-P(B)-\epsilon_u,~~~\forall B\in\mathcal{B},
\end{align*}
or he chooses to buy nothing and
\begin{align*}
u_i(B)-P(B)\leq \epsilon_u,~~~\forall B\in\mathcal{B}.
\end{align*}
\end{definition}
If $\epsilon_u=0$, then the definition coincides with users with allocations in their preferred set of bundles (after accounting for the price), and users with no allocations making a non-positive return on every good. Market-clearing prices have this property along with the market-clearing property.

\begin{theorem}
\label{thm:supporting_price}
\iffalse
If $u_i(B)>0$ for all $i\in N$ and all (non-empty) $B\in \mathcal{B}$, $u_i(B^1)<u_i(B^2)$ for all $B^1\prec B^2$ and $i\in N$, and $\sum_{j\in G}s_j'\leq k\cdot|N|$ holds, then
\fi
The set of supporting prices associated with any designed allocation $\overline{x}\in F$ (from lottery construction procedure of POPT mechanism) exists.
\end{theorem}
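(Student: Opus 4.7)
The plan is to construct an explicit set of supporting prices from the dual optimum of (LIP), and then verify the two defining inequalities of a supporting price system by invoking complementary slackness on (MLIP) rather than on (LIP) itself.

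First, I would let $(\pi^*, p^*)$ denote a dual optimal solution of (LIP), where $\pi_i^*\ge 0$ is the multiplier of the demand constraint of agent $i$ and $p_j^*\ge 0$ is the multiplier of the supply constraint of good type $j$, and I would define the POPT price of a bundle $B$ by $P(B):=\sum_{j\in G} B_j p_j^*$. By Theorem~\ref{thm: main}, the pair $(\pi^*,p^*)$ is also a dual optimal solution for (MLIP), and the chosen $\overline{x}\in F$ is a primal optimum of (MLIP). I can therefore apply complementary slackness for (MLIP) to extract three facts: (i) dual feasibility gives $\pi_i^* + P(B) \ge w_i(B)u_i(B)$ for every $i\in N$ and $B\in\mathcal{B}$; (ii) whenever $\overline{x}_i(B)>0$ the dual constraint is tight, i.e.\ $\pi_i^* + P(B) = w_i(B)u_i(B)$; and (iii) whenever $\sum_B \overline{x}_i(B) < 1$ we must have $\pi_i^* = 0$.

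Next, I would translate (i)--(iii) into the definition of supporting prices. For an agent $i$ who is assigned a bundle $\hat{B}$ in $\overline{x}$, combining (i) and (ii) immediately gives $w_i(\hat{B})u_i(\hat{B}) - P(\hat{B}) \ge w_i(B)u_i(B) - P(B)$ for every $B\in\mathcal{B}$. Writing $U_{\max} := \max_{i,B} u_i(B)$ and using $w_i(B)\in[1-\delta_w,1+\delta_w]$, I can absorb the weight perturbations on both sides and obtain $u_i(\hat{B}) - P(\hat{B}) \ge u_i(B) - P(B) - 2\delta_w U_{\max}$. For an agent $i$ who receives nothing in $\overline{x}$, fact (iii) together with dual feasibility gives $P(B)\ge w_i(B)u_i(B)$, whence $u_i(B) - P(B) \le (1-w_i(B))u_i(B) \le \delta_w U_{\max}$. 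Choosing $\delta_w$ small enough that $2\delta_w U_{\max} \le \epsilon_u$ (which is a parameter of the mechanism) then yields both defining inequalities of a supporting price system, proving existence.

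The hard part will be justifying that the \emph{same} dual multipliers can legitimately play the role of prices for the \emph{integer} allocation $\overline{x}$, since classical LP duality applies directly only to the fractional optimum $x^*$ of (LIP). This is exactly the content of Theorem~\ref{thm: main}, which was the whole motivation for the modification introduced in the iterative rounding subroutine: by preserving the set of active demand constraints through each rounding step, the routine was engineered so that $(\pi^*,p^*)$ remains dual optimal for (MLIP) and $\overline{x}$ is its primal optimum, activating complementary slackness in the form needed above. Once Theorem~\ref{thm: main} is invoked, the remainder of the argument is a short algebraic computation; the only residual calibration is the choice of $\delta_w$, which must be taken small enough that the weight-perturbation error is dominated by the tolerance $\epsilon_u$.
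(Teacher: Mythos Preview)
Your proposal is correct and follows essentially the same approach as the paper: both take a dual optimum of (MLIP), use complementary slackness together with Theorem~\ref{thm: main} to show the assigned bundle achieves the weighted maximum, and then absorb the $w_i(B)$ perturbations into an error bounded by a multiple of $\delta_w$. The only cosmetic differences are that the paper works with an arbitrary dual optimum $(\overline{\alpha},\overline{p})$ of (MLIP) and the agent-specific bound $M_i=\max_{B}u_i(B)$, whereas you specialize immediately to the (LIP) dual $(\pi^*,p^*)$ and use the global $U_{\max}$; neither changes the argument.
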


\begin{proof}
See Appendix \ref{app: proofsupport}.
\end{proof}

\begin{remark}
From Theorem \ref{thm:supporting_price}, we know that supporting prices always exists for the allocation we get from Lottery Construction procedure, since the solution of dual of (MLIP) are supporting prices. From the proof of the Theorem \ref{thm: main} (in Appendix \ref{app: A}), we can see that the Lagrange multipliers of original (LIP) are also multipliers of (MLIP), which means that a dual solution of (LIP) is also a dual solution of (MLIP). Therefore a dual solution of (LIP) forms a supporting price for \emph{any} designed allocation scheme $\overline{x}\in F$. Ensuring that this property holds is an important contribution of our paper.
\end{remark}

\subsubsection{Dual Solution as Approximately Envy-free Prices}
~~~~The following part shows that under the prices given by the dual solution of (MLIP), the allocation suggested by $\overline{x}$ is approximately envy-free. We first give a formal definition of envy-freeness, then we show the allocation and prices we obtain ensure this property. It is important to note that unlike in \cite{nguyen2015assignment}, envy-freeness is not added as an explicit constraint in (LIP).

\begin{definition}
Let $P(B)$ denote the price of bundle $B$ (the price of an empty bundle is 0), $x$ be the allocation vector, and $\epsilon_u$ be the acceptable error. Then the pricing rule is said to be \emph{approximately envy-free} if\footnote{The definition is an extension of the definition in \cite{guruswami2005profit} as they consider exact envy-freeness, i.e. the case when $\epsilon_u=0$.}
\begin{align*}
\sum_{B\in\mathcal{B}}[u_i(B)-P(B)]x_i(B)\geq \sum_{B\in\mathcal{B}}[u_i(B)-P(B)]x_k(B)+\epsilon_u,~~~\forall i,k\in N
\end{align*}
\end{definition}

\begin{theorem}
\label{thm: envyfree}
Under any optimal allocation $x^*$ for problem (LIP), if the pricing rule is
\begin{align*}
P(B)=\sum_{j\in G}B_jp_j^*
\end{align*}
where $p_j^*~(j\in G)$ is some Lagrange multiplier associated with supply constraints of (LIP), then the approximate envy-freeness condition holds.
\end{theorem}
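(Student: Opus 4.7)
The plan is to apply LP duality and complementary slackness directly to (LIP), mirroring the argument used for Theorem~\ref{thm:supporting_price} but with the fractional optimum $x^*$ in place of the rounded allocation. Let $\alpha_i^*$ and $p_j^*$ denote the Lagrange multipliers for the demand and supply constraints of (LIP), respectively, so dual feasibility reads $\alpha_i^* \geq w_i(B)u_i(B) - \sum_{j\in G} B_j p_j^*$ for every $i\in N$ and $B\in\mathcal{B}$, and complementary slackness gives (i) equality whenever $x_i^*(B)>0$, and (ii) $\alpha_i^*=0$ whenever $\sum_B x_i^*(B)<1$. Setting $P(B) := \sum_j B_j p_j^*$, the first step is to convert these $w$-weighted relations into clean bounds on the \emph{un}weighted surplus by exploiting $w_i(B)\in[1-\delta_w,1+\delta_w]$. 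Writing $u_i(B)-P(B) = (1-w_i(B))u_i(B) + (w_i(B)u_i(B)-P(B))$ and letting $M_i := \max_{B\in\mathcal{B}} u_i(B)$, one obtains the two key inequalities
\[
u_i(B)-P(B)\leq \alpha_i^* + \delta_w M_i \quad \text{for all } B,\qquad u_i(B)-P(B)\geq \alpha_i^* - \delta_w M_i \quad\text{whenever } x_i^*(B)>0.
\]

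The second step is a case split according to whether agent $i$'s demand constraint is tight. In Case~A, $\sum_B x_i^*(B)=1$: the lower inequality averaged against $x_i^*$ yields $\sum_B [u_i(B)-P(B)]\,x_i^*(B)\geq \alpha_i^* - \delta_w M_i$, while the upper inequality averaged against $x_k^*$ (together with $\sum_B x_k^*(B)\leq 1$ and $\alpha_i^*+\delta_w M_i\geq 0$) yields $\sum_B [u_i(B)-P(B)]\,x_k^*(B)\leq \alpha_i^* + \delta_w M_i$. Subtracting gives a gap of at least $-2\delta_w M_i$. In Case~B, $\sum_B x_i^*(B)<1$: complementary slackness forces $\alpha_i^*=0$, so both sums lie in $[-\delta_w M_i,\,\delta_w M_i]$, and the gap is again at least $-2\delta_w M_i$.

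The final step is to absorb the $2\delta_w M_i$ slack into the tolerance $\epsilon_u$. Since $\delta_w$ is a preset design parameter and $M := \max_{i\in N} M_i$ is a finite constant determined by the instance, choose $\delta_w \leq \epsilon_u/(2M)$; this guarantees $2\delta_w M_i \leq \epsilon_u$ uniformly in $i$, and therefore
\[
\sum_{B\in\mathcal{B}}[u_i(B)-P(B)]\,x_i^*(B) \;\geq\; \sum_{B\in\mathcal{B}}[u_i(B)-P(B)]\,x_k^*(B) - \epsilon_u
\]
for every pair $i,k\in N$, which is the desired approximate envy-freeness.

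The main obstacle is purely bookkeeping: the randomized weights $w_i(B)$ perturb each surplus $u_i(B)-P(B)$ asymmetrically, so the dual inequalities only hold up to $\delta_w M_i$ in each direction. One has to be careful to apply the loose side of the bound to $x_k^*$ and the tight (complementary-slackness) side to $x_i^*$, and to handle the non-tight demand case separately so that the $\alpha_i^*$ terms cancel cleanly rather than leaving an uncontrolled residual. Once these two perturbations are tracked, the conclusion follows from an averaging argument and a single smallness condition on $\delta_w$ relative to $\epsilon_u$.
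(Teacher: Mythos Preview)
Your proposal is correct and follows essentially the same route as the paper's proof: LP duality and complementary slackness applied to (LIP), a case split on whether agent $i$'s demand constraint is tight, and absorption of the $\delta_w$-perturbation into $\epsilon_u$. The only cosmetic difference is that you pass to the unweighted surplus $u_i(B)-P(B)$ \emph{before} comparing agents (losing $\delta_w M_i$ on each side, hence $2\delta_w M_i$ total), whereas the paper first establishes exact envy-freeness for the weighted surplus $w_i(B)u_i(B)-P(B)$ and then converts, claiming a loss of only $\delta_w M_i$; since $\delta_w$ is a free design parameter this factor of two is immaterial.
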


\begin{proof}
See Appendix \ref{app: envyfreeproof}.
\end{proof}
\iffalse
\textbf{Proposition 2.5} Under the allocation $\overline{x}$ resulting from Iterative Rounding and any market clearing price associated with $\overline{x}$, the envy-freeness condition holds for the given price rule.

\emph{Proof.}  Since under market clearing prices, agents would make choices according to individual rationality. It is a direct result from individual rationality that every agent would choose among the (possibly empty) bundles that yields the maximum payoff. 

Since $\overline{x}$ is an optimal solution of (MLIP), 
\qed
\fi

%\textbf{Proposition 2.5}
\begin{proposition}
\label{prop: 2.5}
For any integral allocation $\overline{x}\in B$, if the price rule is
\begin{align*}
P(B)=\sum_{j\in G}B_jp_j^*
\end{align*}
where $p_j^*~(j\in G)$ is some Lagrange multiplier associated with supply constraints of (LIP), then the approximate envy-freeness condition (not only holds for $x^*$ but) also holds for $\overline{x}$.
\end{proposition}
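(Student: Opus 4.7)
The plan is to leverage Theorem~\ref{thm: main} to reduce Proposition~\ref{prop: 2.5} to a direct adaptation of the proof of Theorem~\ref{thm: envyfree}, but applied to (MLIP) in place of (LIP). The whole point of the modified iterative rounding procedure is to ensure that the Lagrange multipliers of (LIP) remain valid Lagrange multipliers after integral rounding, so this reduction should work out cleanly.

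First, I would invoke Theorem~\ref{thm: main} to record two facts: (i) $\overline{x}$ is an optimal primal solution of the problem (MLIP) associated with it; and (ii) if $(\alpha^*, p^*)$ is a dual-optimal pair for (LIP), then the same $p^*$ (together with an appropriately chosen $\alpha^*$) is dual-optimal for (MLIP). Fact (ii) is the crucial one, and it is exactly the property established in the proof of Theorem~\ref{thm: main}; without the modification introduced in Section~\ref{sec: itr}, the LIP prices would not in general be dual-feasible for the rounded problem, and the argument would collapse.

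Second, with $\overline{x}$ primal-optimal and $(\alpha^*, p^*)$ dual-optimal for (MLIP), complementary slackness yields the same two characterizations used in the proof of Theorem~\ref{thm: envyfree}: whenever $\overline{x}_i(\hat B) > 0$ we have $\alpha_i^* = w_i(\hat B)u_i(\hat B) - \sum_{j\in G} \hat B_j p_j^*$, and this value equals $\max_{B\in\mathcal{B}}\{w_i(B)u_i(B) - \sum_{j\in G} B_j p_j^*, 0\}$; whenever agent $i$ receives no bundle under $\overline{x}$, we have $w_i(B)u_i(B) - \sum_{j\in G} B_j p_j^* \leq 0$ for every $B$. Then I would run the same two-case analysis as in Theorem~\ref{thm: envyfree} — (a) $\sum_{i}\overline{x}_i(B) = 1$-type constraints and (b) the slack case — to conclude that
\begin{equation*}
\sum_{B\in\mathcal{B}}[w_i(B)u_i(B) - P(B)]\,\overline{x}_i(B) \;\geq\; \sum_{B\in\mathcal{B}}[w_i(B)u_i(B) - P(B)]\,\overline{x}_k(B), \quad \forall\, i,k \in N.
\end{equation*}
Because $\overline{x}$ is integral, these sums each contain at most one nonzero term, so the manipulation is if anything simpler than in Theorem~\ref{thm: envyfree}.

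Finally, I would strip the weights $w_i$ to return to the true valuations $u_i$. Since $w_i(B) \in [1-\delta_w, 1+\delta_w]$ and $|{\overline{x}_i(B) - \overline{x}_k(B)}| \leq 1$, the conversion introduces an error of at most $\delta_w M_i$, where $M_i = \max_B u_i(B)$, and this is absorbed into $\epsilon_u$ by choosing $\delta_w$ sufficiently small, exactly as at the end of the proof of Theorem~\ref{thm: envyfree}. The main conceptual obstacle is really the dual-compatibility step (ii), and that is delivered to us for free by Theorem~\ref{thm: main}; everything after that is a template replay of Theorem~\ref{thm: envyfree}'s complementary-slackness argument with $x^*$ replaced by $\overline{x}$ and (LIP) replaced by (MLIP).
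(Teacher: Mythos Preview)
Your proposal is correct and matches the paper's approach. The paper's proof is simply more compressed: it invokes Theorem~\ref{thm: main} to transport the (LIP) multipliers to (MLIP), then cites Theorem~\ref{thm:supporting_price} applied to (MLIP) rather than replaying the complementary-slackness argument in full as you propose—but the underlying logic (dual compatibility from Theorem~\ref{thm: main}, then complementary slackness for $\overline x$ in (MLIP), then stripping the $w_i$) is identical.
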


\begin{proof}
From the Proof of Theorem \ref{thm: main} (Appendix \ref{app: A}) we know that if $(\pi^*, v^*)$ is a set of Lagrange multiplier of (LIP), then $(\pi^*, v^*)$ is also a set of Lagrange multiplier of (MLIP). Since $\overline{x}$ is a solution to (MLIP) according to Theorem \ref{thm: main}, we have $p^*_j$ to be also the Largrange Multipliers for (MLIP). Apply Theorem \ref{thm:supporting_price} to (MLIP) then we would find that the approximate envy-freeness condition holds for $\overline{x}$.
\end{proof}

From Proposition \ref{prop: 2.5} we know that we can calculate the dual solution of (LIP) instead of (MLIP) to figure out a set of POPT prices that supports all possible allocation $\overline{x}$. This is important, as we can then get the dual solution of (LIP) through the process of solving (LIP). We do not need to construct and solve the dual of (MLIP), which makes the calculation of prices much more efficient.

\subsection{Summary of POPT Mechanism}

For the POPT mechanism, we try to find an approximately efficient and envy-free allocation scheme for spectrum band allocation with complementarities as well as the price of goods associated with the scheme. To achieve this we first relax the constraint to allow the allocation indicator variables to be real-valued, and we set up an initial linear programming problem (LIP) with only the demand and supply constraints. We then solve (LIP) to get a solution $x^*$. 

Then, we perform the Lottery construction process on $x^*$ to round $x^*$ to a set $F$ of integral solutions. We then assign a probability to each of these integral solutions to form a lottery such that it has expectation $x^*$.

When realizing the mechanism, we choose an integral solution in $F$ based on some random event (such as rolling a die) so that every integral solution has the probability of being chosen equal to the probability assigned to it.

We solve the dual problem of (LIP) at the same time to obtain the prices associated with POPT mechanism. As implied by Theorem \ref{thm:supporting_price} and Theorem \ref{thm: envyfree}, this price supports any allocation scheme given by POPT mechanism and it is approximately envy-free.

\section{Analysis}\label{sec:analysis}
In this section we start by showing asymptotic strategy-proofness of the POPT mechanism, and then present a numerical analysis of a specific spectrum allocation problem.

\subsection{Theoretical Analysis}

In this part, we will first analyze the complexity and strategy-proofness of the mechanism. Then we will discuss on the methods for solving the linear programming problems for the mechanism.
\subsubsection{Complexity Analysis}
~~~~Fix $k$, the calculation of the approximate optimal allocation can be performed in polynomial time with regard to $|N|$ and $|G|$. (See Appendix \ref{app: B}) The mechanism is therefore much more computationally efficient than the optimal mechanism: integer linear programming with VCG prices.
\subsubsection{Asymptotic Strategy-proofness}
~~~~One important property that a mechanism should have is strategy-proofness. A mechanism is said to be \emph{strategy-proof} if the best strategy for any individual agent is to report his or her utility truthfully.

Consider that we have finite types of agents, and the set of type is denoted as $\Theta=\{\theta_1, \theta_2,\cdots,\theta_m\}$, and the number of agents with type $\theta$ is $n_\theta$. Denote the utility of bundle $B$ for agent of type $\theta$ as $u_\theta(B)$ and the type of agent $i$ as $\beta_i$. Following \cite{nguyen2015assignment} we define {asymptotic strategy-proofness} as follows.

\begin{definition}
If $\overline{x},\overline{p}$ are the allocation vector and price vector that POPT mechanism gives when every agent report his or her type $\beta_i$ truthfully, and $\tilde{x}$ and $\tilde{p}$ are the corresponding results when agent $i$ report as type $\gamma$ while others report their types truthfully, a mechanism is called \emph{asymptotically strategy-proof} if for any $\varepsilon_0>0$ there exists $N_0\in\mathbb{N}$ such that when $n_\theta>N_0$ for all $t=1,2,\cdots,m$ we have
\begin{align*}
\mathbb{E}\left[\sum_{B\in \mathcal{B}}\left(u_{\beta_i}(B)-\overline{p}(B)\right)\overline{x}_i(B)\right]\geq \mathbb{E}\left[\sum_{B\in \mathcal{B}}(u_{\zeta}(B)-\tilde{p}(B))\tilde{x}_i(B)\right]-\epsilon_u-\varepsilon_0
\end{align*}
for all $i\in N,\zeta\in \Theta$, where $\epsilon_u>0$ is the acceptable error.
\end{definition}

This definition generally means that when the number of agents comes large, the increase of average payoff that an agent could get by misreporting his or her type becomes less and less and converges to a number less than or equal to $\epsilon_u$. If the agents are risk-neutral, then this implies that an agent would not have an impulse to mis-report her type.

The following theorem is a variant of Theorem 3.3 in \cite{nguyen2015assignment}. The additional steps arise from ensuring the convergence of the dual variables.

\begin{theorem}
\label{thm: sp}
Set $w_i$ to be the same for the same type of agents, then the mechanism POPT is asymptotically strategy-proof.
\end{theorem}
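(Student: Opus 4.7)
The plan is to mimic the proof of Theorem 3.3 in \cite{nguyen2015assignment}, but augmented with a stability result for the dual optimum of (LIP) so that the POPT prices themselves converge as the population grows. The core intuition is that (i) by the envy-freeness guarantee of Theorem \ref{thm: envyfree} and Proposition \ref{prop: 2.5}, under the POPT prices an agent weakly prefers her own (possibly empty) bundle to any other bundle up to error $\epsilon_u$; and (ii) a single agent's unilateral misreport can perturb the LP and its dual only by an $O(1/\min_\theta n_\theta)$ amount, so for large populations the prices she faces are essentially independent of her report.

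First I would set up the \emph{type-aggregated} LP. Because $w_i$ is chosen the same for all agents of the same type, agents of a given type $\theta$ are interchangeable in (LIP), and the problem depends on $(n_\theta)_{\theta\in\Theta}$ only through the \emph{fractions} $n_\theta/|N|$. Scaling supplies accordingly, the per-capita problem converges (in the sense of Berge's maximum theorem) to a limiting LP whose primal and dual optima I would denote by $\hat x$ and $(\hat\alpha,\hat p)$. Following the Nguyen--Vohra argument, when agent $i$ of true type $\beta_i$ deviates to a report $\zeta$, the resulting instance is obtained by moving one agent from $\beta_i$ to $\zeta$; this changes each $n_\theta$ by at most one, so the per-capita supplies and counts change by $O(1/\min_\theta n_\theta)$.

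Next comes the step that is new relative to \cite{nguyen2015assignment}: showing the dual variables converge. I would argue that the perturbations $\epsilon_j$ uniform on $[\delta_\epsilon,2\delta_\epsilon]$ and the weights $w_i(B)$ uniform on $[1-\delta_w,1+\delta_w]$ ensure, with probability one, that (LIP) is nondegenerate, so its dual optimum is unique, and the dual is continuous in the problem data around this point (by sensitivity analysis for nondegenerate LPs). Hence $(\overline\alpha,\overline p)$ and $(\widetilde\alpha,\widetilde p)$ for the two instances satisfy $\|\overline p-\widetilde p\|_\infty = O(1/\min_\theta n_\theta)$, and likewise for $\alpha$. This is the step I expect to be the \emph{main technical obstacle}, because one has to show that the $\epsilon_j$ and $w_i$ randomization is enough to rule out dual degeneracy with probability one and to give a quantitative stability bound that beats $\varepsilon_0$ uniformly as $\min_\theta n_\theta\to\infty$.

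Finally, I would put the pieces together. Let $\overline P(B)=\sum_j B_j\overline p_j$ and $\widetilde P(B)=\sum_j B_j\widetilde p_j$. Under truthful reporting, Theorem \ref{thm: envyfree} and Proposition \ref{prop: 2.5} give
\begin{equation*}
\sum_{B}\bigl[u_{\beta_i}(B)-\overline P(B)\bigr]\overline x_i(B) \;\geq\; \sum_{B}\bigl[u_{\beta_i}(B)-\overline P(B)\bigr]y(B) - \epsilon_u
\end{equation*}
for every feasible indicator $y$ over $\mathcal{B}$, in particular for $y(B)=\widetilde x_i(B)$. Replacing $\overline P$ by $\widetilde P$ introduces an additive error bounded by $k\,\|\overline p-\widetilde p\|_\infty$, which by the stability step is $O(1/\min_\theta n_\theta)$. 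Taking expectations over the randomness of the $w$'s, $\epsilon$'s, and the lottery, and choosing $N_0$ so that this error is at most $\varepsilon_0$, yields
\begin{equation*}
\mathbb{E}\!\left[\sum_{B}\bigl(u_{\beta_i}(B)-\overline p(B)\bigr)\overline x_i(B)\right] \geq \mathbb{E}\!\left[\sum_{B}\bigl(u_\zeta(B)-\widetilde p(B)\bigr)\widetilde x_i(B)\right]-\epsilon_u-\varepsilon_0,
\end{equation*}
which is precisely asymptotic strategy-proofness. The only delicate bookkeeping is to ensure the misreport cannot make agent $i$ strictly prefer $\widetilde x_i$ under her \emph{true} utilities $u_{\beta_i}$: here one uses that at the (nearly common) POPT prices, truthful reporting already gives her an approximately utility-maximizing bundle in $\mathcal{B}$, so any bundle $\widetilde x_i$ she could obtain by misreporting is at best $\epsilon_u + O(1/\min_\theta n_\theta)$ better.
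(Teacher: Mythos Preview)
Your overall architecture is the paper's: type-aggregate the LP, use the randomization of $w$ and $\epsilon$ to force uniqueness of the optimum, deduce that both the primal and the dual optima are stable under the one-agent perturbation caused by a misreport, and then invoke approximate envy-freeness to finish. You also correctly identify that the only step going beyond \cite{nguyen2015assignment} is the convergence of the dual variables (the prices).

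Two technical points where your execution diverges from the paper and could cause trouble if carried out literally.

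\emph{The limiting LP.} Your ``per-capita'' rescaling is not the right limit here: the supplies $\tilde s_j$ are held fixed as $n_\theta\to\infty$, so dividing through by $|N|$ sends per-capita supply to zero and yields a degenerate limit. The paper instead keeps the aggregated variables $y_\theta(B)=\sum_{i:\beta_i=\theta}x_i(B)$ and lets $n_\theta\to\infty$ directly; the demand constraints $\sum_B(1/n_\theta)y_\theta(B)\le 1$ have coefficients $1/n_\theta\to 0$ and simply vanish, leaving a limit problem (LIP$_\infty$) with only the supply constraints. Uniqueness (with probability~1) is then proved for (LIP$_\infty$) and its dual (DLIP$_\infty$), not for the finite-$n$ problems, and the randomization of $\tilde s_j$ is what guarantees dual uniqueness at the limit.

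\emph{Qualitative versus quantitative stability.} The paper does not attempt an $O(1/\min_\theta n_\theta)$ LP-sensitivity bound. It proves a purely qualitative lemma: if a sequence of problems has a common objective, feasible regions converging (in the set-theoretic sense) to a compact region on which the optimum is unique, then any sequence of optima converges to that optimum. The proof is Berge's maximum theorem for the values plus a Bolzano--Weierstrass subsequence argument for the optimizers. This suffices for the ``for every $\varepsilon_0$ there exists $N_0$'' formulation of asymptotic strategy-proofness and avoids having to establish uniform nondegeneracy along the whole sequence, which your quantitative route would require.

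A smaller difference: the paper applies the envy-freeness inequality inside the \emph{misreport} instance (LIP$_\Theta'$), comparing a surviving type-$\beta_i$ agent to a type-$\zeta$ agent there, and then transfers the left-hand side back to the truthful instance via the primal/dual convergence; you apply it in the truthful instance and swap prices afterwards. Either order works once convergence of both $(y^*,p^*)$ and $(y^{**},p^{**})$ to the common limit is in hand.
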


\begin{proof}
See Appendix \ref{app: C}.
\end{proof}

Therefore, we can say that the POPT mechanism is approximately truthful. When the number of agents get large, one can get approximately the best payoff by reporting his or her true type.

\subsubsection{Methods for Solving Linear Programming Problem}
~~~~For the initial linear programming problem (LIP) and updated linear programming problems (ULIP) in Iterative Rounding, we choose the Simplex method to solve them. The reasons for us to choose Simplex method are given as follows.
\begin{enumerate}
	\item It is fast in practice. The average complexity of simplex method is $O(s)$ where $s$ is the number of constraints. In (LIP), we have significantly fewer constraints than variables, thus it is very suitable for (LIP).
	\item It solves the problem based on extreme points. In (LIP) and all (ULIP), we would like to find an extreme point solution. Simplex method ensures that the solution is an extreme point.
\end{enumerate}

\subsection{Empirical Analysis}

In the following part, we will analyze the whole mechanism from multiple perspectives. Based on a utility model that is described in \cite{zhou2013complexity}, we will first investigate the efficiency of the mechanism on a utility model. Then we will investigate how many more goods we need on average to fulfill the allocation suggested by the mechanism. We will use a grid structure for the analysis, so next we will discuss how the prices and the allocations change based on location. Finally, we finely categorize the form of integer solutions produced by our mechanism, especially the shapes of the bundles.

The utility model is constructed in the following way: Consider a 2-D area that is a $m_g\times n_g$ grid, each grid has unit area and it is considered as a type of good. For each grid (good) there are $s_g$ bands, which can be viewed as the available supply for the grid. There are $N_a$ agents and each agent would like at most a bundle of $k_a$ goods.

For each agent, there are some end-users distributed via a 2-D spatial Poisson process in the area, with parameter $\mu$ (person/unit area). The parameter $\mu$ determines the system load, which we will vary. The utility of grid $j$ to an agent $i$, which is denoted as $u^i_j$, is proportional to the number of end users associated with that agent in that area. However, this utility can only be fully realized when the agent gets the same amount of bands in all adjacent areas. If the agent fails to get the same amount of bands in some adjacent area, a boundary cost applies.

The boundary cost if agent $i$ gets some bands in area $j$ but fails to get the same amount of bands in area $k$ is denoted as $c_{jk}^i$. We have $c_{jk}^i$ to be proportional to the number of end users of agent $i$ in the boundary area of grid $j$ which is close to grid $k$. For each grid, we specify $\lambda$ as the proportion of boundary area. Denote $G$ as the set of grids (goods), and $E$ as the set of adjacent grid pairs. Then the utility functions on bundles are given by
\begin{align*}
u_i(B)=\sum_{j\in G}u^i_j-\sum_{(j,k)\in E}(B_j-B_k)^+c_{jk}^i
\end{align*}
We do not specialize our analysis and mechanism to the given form of the utility functions. In future work, we plan on studying mechanisms tailored to the specific forms of utility functions described above in order to see if further improvements in performance (in terms of complexity and memory) can be had.

\subsubsection{Analysis of Efficiency}
~~~~Set $m_g=n_g=3, s_g=10, N_a=30, k_a=4$. The total utility versus the portion $\lambda$ of boundary is shown in Figure \ref{fig: 1}. From the figure we can see when the boundary area gets larger, the boundary cost rises and the total utility gets smaller. 

The total utility under three different allocations are compared: average utility for the integral LP optimal solution when the supplies are $s_j+k-1$; average utility under POPT mechanism (when supplies are $s_j$); and average utility for the (fractional) LP optimal solution (when supplies are $s_j$). The experimental results match the theory so that the POPT mechanism has nearly the same average utility as LP optimal solutions. The average for IntLP is only a little bit larger than that of the POPT mechanism for all values of $\mu$ and $\lambda$, which indicates that POPT mechanism is near optimal in expectation. Note that we have to consider the average utility in order to average the spatial Poisson process realizations.

\begin{figure}[!ht]
	\centering
	\includegraphics[width=8cm]{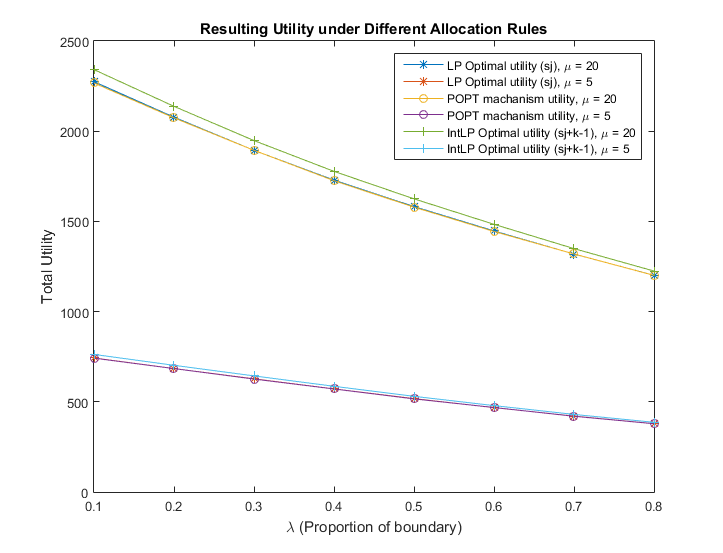}
	\caption{Resulting utility under different allocation rules vs. $\lambda$ for $m_g=n_g=3, s_g=10, N_a=30, k_a=4$.}
	\label{fig: 1}
\end{figure}

%\subsubsection{Analysis of Complimentarity}
\subsubsection{Analysis of Overallocation of Supply}
~~~~From analysis of the POPT mechanism it is clear that we may need to add some additional goods for some regions. We have seen that the theoretical bound for the number of additional goods added per supply is $k-1$. However,in the case $m_g=n_g=3, s_g=10, N_a=30, k_a=4$ (where the total demand exceeds the total supply), experimental results (Figure \ref{fig: addgood}) show that in most cases, the total number of additional goods is significantly less than $(k-1)|G|$. While we have $(k_a-1)|G|=27$, we have the total number of additional goods less than or equal to 12 in more than $99\%$ of the cases. 

\begin{figure}[!ht]
	\centering
	\includegraphics[width=5cm]{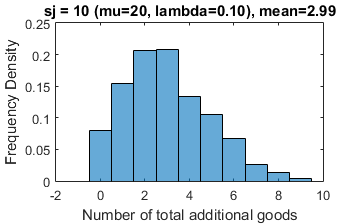}
	\includegraphics[width=5cm]{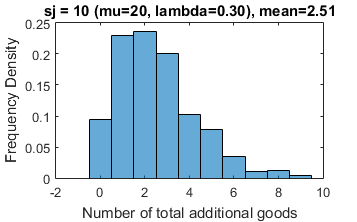}\\
	
	\includegraphics[width=5cm]{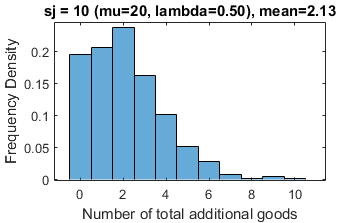}
	\includegraphics[width=5cm]{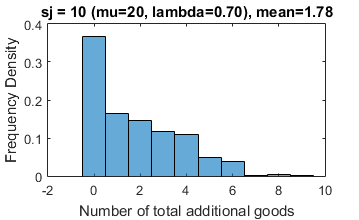}
	\caption{Number of over-allocated goods when $m_g=n_g=3, s_g=10, N_a=30, k_a=4, \mu=20$.}
	\label{fig: addgood}
\end{figure}

From Figure \ref{fig: addgoodvsla}, we can see that the average additional goods gets smaller as the interference cost coefficient $\lambda$ gets larger. When the interference cost is low ($\lambda=0.1$), we have the average additional goods to be 2.99, which then decreases to roughly 1.5 when the interference cost is high ($\lambda=0.8$). Both of these are a small compared to the total number of goods, $m_gn_gs_g=90$. Even the maximum number of goods added, empiricaly $10$, is a small fraction of the total number of good. We then conclude that the mechanism is approximately efficient, and it does not require the addition of significantly more goods for each type.

\begin{figure}[!ht]
	\centering
	\includegraphics[width=6.5cm]{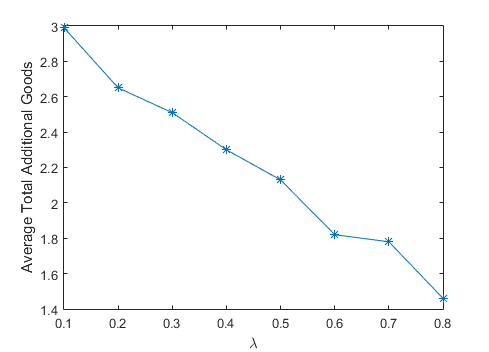}
	\caption{Average Additional Goods vs. $\lambda$ for $m_g=n_g=3, s_g=10, N_a=30, k_a=4, \mu=20$}
	\label{fig: addgoodvsla}
\end{figure}

\subsubsection{Analysis of Local Statistics}
~~~~Besides the global performance of the mechanism, such as the total utility, total additional goods, we also look into the statistics of the results for each element of the grid. For the case $m_g=n_g=3, s_g=10, N_a=30, k_a=4, \mu=20$, we investigate the distribution of number of actual allocated goods for each element of the grid. Here the numbers are such that the total demand exceeds the total supply.

\iffalse
We can see in Figure \ref{fig: X} that, the distribution of the number of allocations does not differ a lot with regard to the location of the grid. As $\lambda$ gets larger, the distribution is more centered at $s=10$. 
\fi
Since the allocation distributions for each element of the grid are visually similar looking, we compute the total variational distances between the distribution of different grid locations. These distances are used as a metric for us to determine whether two distributions are similar or not. For the 3x3 grid, grid locations $(1,1)$, $(1,3)$, $(3,1)$ and $(3,3)$ are statistically similar, as they are both on the corners of the area and share the same number of borders with the neighboring locations. By the same logic, grid locations $(1,2)$, $(2,1)$, $(2,3)$ and $(3,2)$ are also statistically similar. For statistically similar grids, we expect that the distributions of the allocations in these grid locations would be close to each other; note that this argument is only for the marginal distributions. In Figure~\ref{fig: X} we find that the distance between allocation distributions on statistically dissimilar grid locations does not differ a lot from that of statistically similar ones. We believe this because the demand exceeds the supply so that all locations get fully subscribed.
%Therefore, we compute the total variation distance of statistically symmetrical grids as baselines to be compared with that between statistically asymmetrical grids. However when $\lambda$ is fixed the grids at statistically symmetrical locations does not show a significant difference in distribution. The distance between distributions on statistically asymmetrical grids does not differ a lot from that of statistically symmetrical grids.

\begin{figure}[!ht]
	\centering
	\includegraphics[width=7cm]{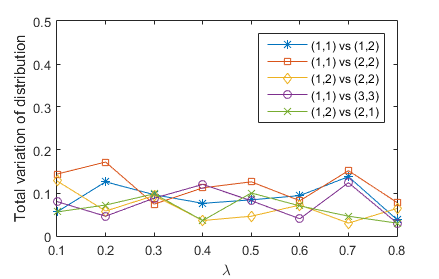}
	\caption{Total variation distance of allocation distributions vs. $\lambda$ for $m_g=n_g=3, s_g=10, N_a=30, k_a=4, \mu=20$}
	\label{fig: X}
\end{figure}

However, as we notice from Figure \ref{fig: P} when the total demand exceeds the total supply, the prices at different grid locations show significant variations. The average prices of interior grid locations are generally higher than that of the grid locations on the boundary of the arena. We also observe that the larger the interference parameter $\lambda$, the larger the difference in the average prices. This matches our intuition that the interior grids are generally more valuable since there are fewer boundaries with other locations, especially when the interference cost is large.

\begin{figure}[!ht]
	\centering
	\includegraphics[width=6.5cm]{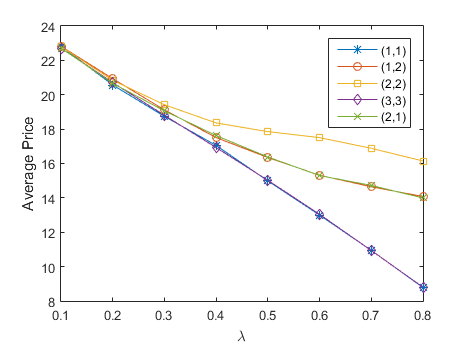}
	\caption{Average price of different grids vs. $\lambda$ for $m_g=n_g=3, s_g=10, N_a=30, k_a=4, \mu=20$}
	\label{fig: P}
\end{figure}

For the case $m_g=n_g=4, s_g=10, N_a=30, k_a=4, \mu=20$, we also investigated the distribution of number of allocation on each grid. Here the total supply exceeds the total demand. Unlike the case when total demand exceeds the total supply, here we can see that the number of allocations differs significantly based on the grid locations (Figure \ref{fig: Y}). In general, the average number of allocations for interior grids are greater than that of the grids on the boundaries of the area. Here we do not display any statistics for the prices as they're all uniformly low, as is to be expected.
\iffalse
\begin{figure}[!ht]
	\centering
	\includegraphics[width=4cm]{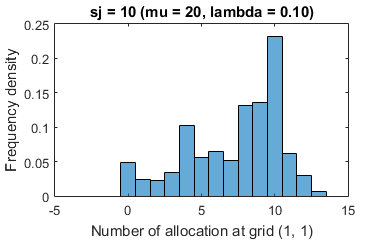}
	\includegraphics[width=4cm]{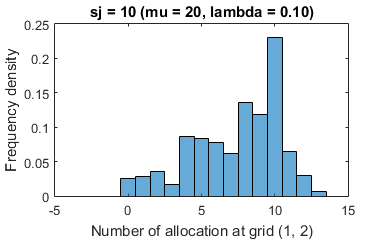}
	\includegraphics[width=4cm]{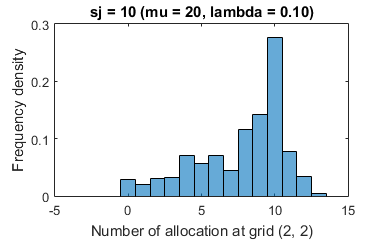}\\
	\includegraphics[width=4cm]{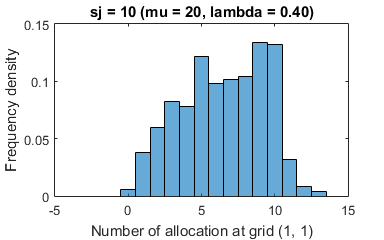}
	\includegraphics[width=4cm]{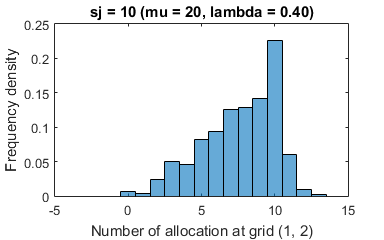}
	\includegraphics[width=4cm]{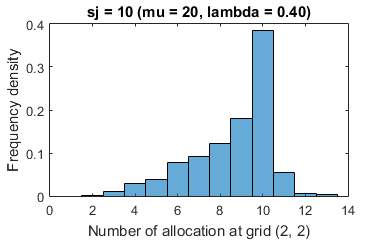}\\
	\includegraphics[width=4cm]{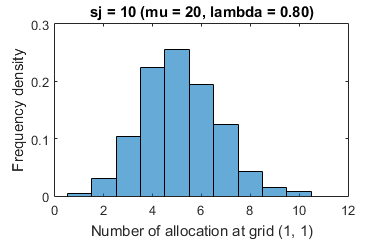}
	\includegraphics[width=4cm]{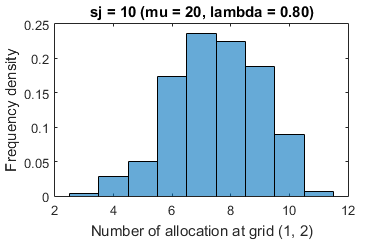}
	\includegraphics[width=4cm]{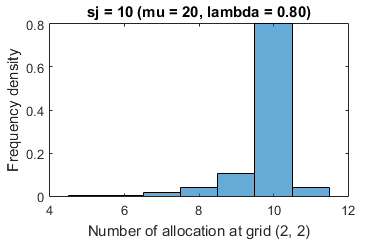}\\
	\caption{Number of allocations for specific grid locations for $m_g=n_g=4, s_g=10, N_a=30, k_a=4$}
	\label{fig: Y}
\end{figure}
\fi

\begin{figure}[!ht]
	\centering
	\includegraphics[width=6cm]{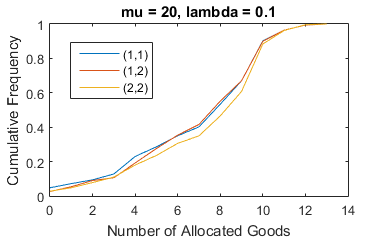}\\
	\includegraphics[width=6cm]{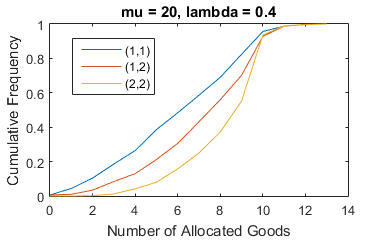}\\
	\includegraphics[width=6cm]{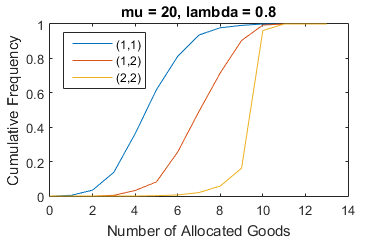}
	\caption{CDF of Number of allocations for specific grid locations for $m_g=n_g=4, s_g=10, N_a=30, k_a=4$}
	\label{fig: Y}
\end{figure}

\subsubsection{Analysis of Shape of Bundles}
~~~~We also investigate on the geometry of the bundles that the mechanism allocate to the users. Typically, bundles with more internal boundaries are more valuable than those with fewer internal boundaries. We expect that the mechanism would prefer to allocate bundles with more internal boundaries (such as "O"-shaped 4-bundles), especially when the interference cost is high.

The empirical results agree with our intuition. As we observe from Figure \ref{fig: S}, for the case where $m_g=n_g=3, s_g=10, N_a=30, k_a=4, \mu=20, \lambda=0.8$, more 4-bundles are allocated than 3-bundles. Note that this is a setting where the demand exceeds the supply. Among all the 4-bundles that are allocated, bundles with more internal boundaries are significantly preferred by the mechanism, which means that the number of "O"-shaped bundles of the resulting allocation is significantly higher than that of bundles of any other shape ("L"-shaped, "T"-shaped, or "Z"-shaped bundles). 

\begin{figure}[!ht]
	\centering
	\includegraphics[width=5cm]{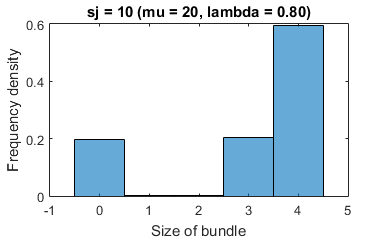}
	\includegraphics[width=5cm]{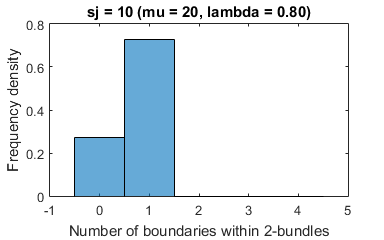}\\
	\includegraphics[width=5cm]{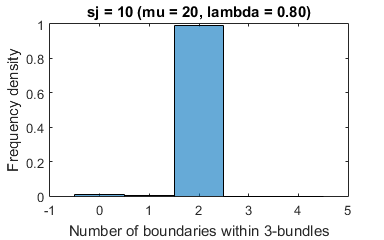}
	\includegraphics[width=5cm]{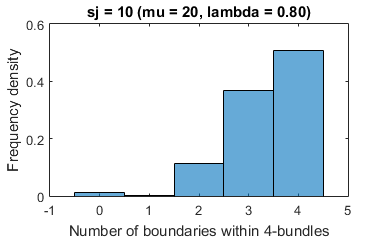}
	\caption{Statistics on size of bundles, and number of boundaries within bundles given the size of a bundle, $m_g=n_g=3, s_g=10, N_a=30, k_a=4,\lambda=0.8$}
	\label{fig: S}
\end{figure}

We also investigate the case when $m_g=n_g=4, s_g=10, N_a=45, k_a=4, \mu=20, \lambda=0.8$. In this case, nearly all assigned bundles are 4-bundles. Note that here the total supply exceeds the total demand. From Figure \ref{fig: S1} we can see that most of the assigned 4-bundles has 4 internal boundaries, which means that they are "O"-shaped. This again matches our intuition.

\begin{figure}[!ht]
	\centering
	\includegraphics[width=5cm]{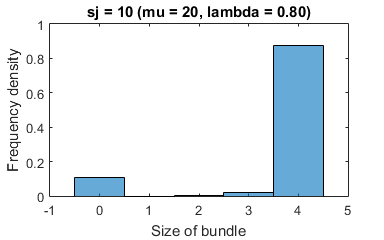}
	\includegraphics[width=5cm]{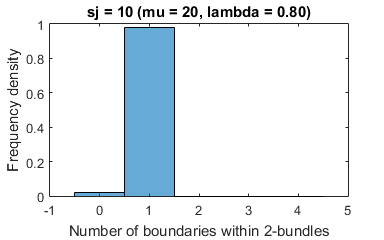}\\
	\includegraphics[width=5cm]{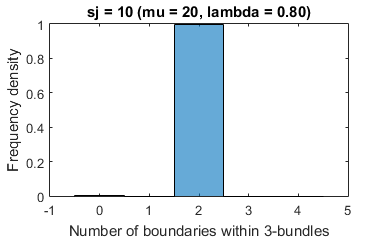}
	\includegraphics[width=5cm]{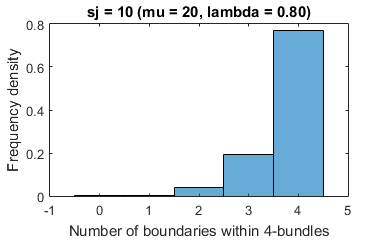}
	\caption{Statistics on size of bundles, and number of boundaries within bundles given the size of a bundle, $m_g=n_g=3, s_g=10, N_a=45, k_a=4,\lambda=0.8$}
	\label{fig: S1}
\end{figure}

\subsubsection{Analysis of Supporting Prices}
~~~~We have theoretical guarantees that every agent obtains a bundle with his or her best possible payoff within some error $\epsilon_u$. The error $\epsilon_u$ can be controlled by parameter $\delta_w$. To verify this empirically we calculate the difference between the actual payoff an agent obtains and the best possible payoff. Theoretically, this difference should be larger than $-\epsilon_u$, which is a very small number. Setting $\delta_w=10^{-5}$, in the case of $m_g=n_g=3, s_g=10, N_a=45, k_a=4, \mu=20, \lambda=0.8$, we observe in Figure \ref{fig: D} that in most cases the difference is 0; results of all iterations and all agents are pooled together in the figure. We also find that the minimum payoff difference we obtain in simulation is $-0.0011$, which indicates a loss. However as we found the average actual payoff that agents obtain to be $4.123$, we conclude that the loss is negligible. Actually, this loss can be arbitrarily small as we set $\delta_w$ to be small.

\begin{figure}[!ht]
	\centering
	\includegraphics[width=10cm]{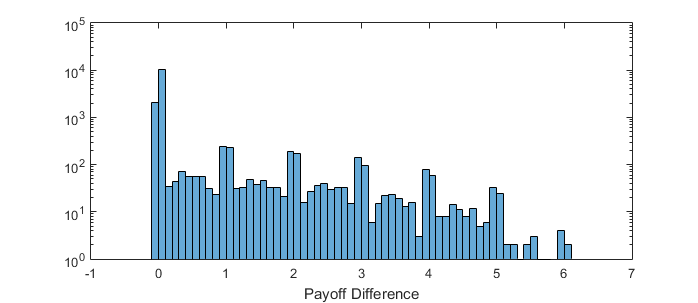}
	\caption{Payoff difference of agents for $m_g=n_g=3, s_g=10, N_a=45, k_a=4, \mu=20, \lambda=0.8$}
	\label{fig: D}
\end{figure}

\section{Implementation}\label{sec:implement}

We implemented the POPT mechanism as the Spectrum Allocation Tool in Python. 
%to realize the POPT mechanism. 
It takes advantage of package CVXPY \cite{cvxpy} and uses ECOS solver \cite{ecos} to solve linear programming and quadratic programming problems, where ECOS is one of the most efficient solvers among all free solvers. The tool can be easily adapted to commercial solvers like GUROBI or MOSEK when these solvers are available to the user.
%
%\begin{figure}[!ht]
%	\centering
%	\includegraphics[width=10cm]{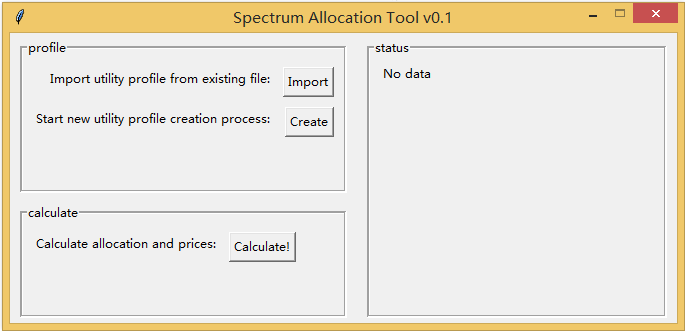}
%	\caption{Spectrum Allocation Tool GUI}
%\end{figure}
%
The tool accepts two kinds of input: JSON and plain text. GUI input is currently not accepted as the input is usually some large data, which makes GUI input inconvenient. After calculation, the tool will visualize the resulting allocation scheme.
The tool is open-source and is available at \url{https://bitbucket.org/dwtang/batool} .

\section{Conclusion and Future Improvements}\label{sec:conclusion}
\iffalse
~~~~Nguyen et al.\cite{nguyen2015assignment} has casted an insight on envy-free mechanisms for agents with complementarities in preferences.
\fi
In this paper we proposed the POPT mechanism based on the OPT mechanism for matching problems where prices cannot be assessed \cite{nguyen2015assignment}, where we utilize prices to allocate goods to agents with complementarities in preferences. We have proved that the mechanism is approximately envy-free and asymptotically strategy-proof. We have also shown that the mechanism is approximately efficient in practice. Finally we implemented the mechanism as a GUI tool to enable practice of the mechanism in real world settings. 

The POPT mechanism has a few nice properties. %Firstly, the mechanism utilizes prices to ensure that agents report their values truthfully, and the price is approximately envy-free.
First, as the mechanism utilizes prices, envy-freeness emerges as a consequence and does not need to be imposed as a constraint in the underlying optimization problem.
Furthermore, the prices for the mechanism are independent of what allocation the lottery scheme picks from the available allocation scheme set. Since the prices are the dual variables of the initial linear programming relaxation, we can calculate the prices at the same time of solving it, which makes the calculation of prices efficient.

In the future work, we plan on generalizing our mechanism to the case that each agent may desire more than one bundle. We also plan on determining theoretic guarantees on the efficiency of the mechanism using the optimal integer allocation with $k-1$ extra goods of each type.

% Bibliography
%\bibliographystyle{ACM-Reference-Format-Journals}
%\bibliography{paper-bibfile}

\begin{thebibliography}{00}

%%% ====================================================================
%%% NOTE TO THE USER: you can override these defaults by providing
%%% customized versions of any of these macros before the \bibliography
%%% command.  Each of them MUST provide its own final punctuation,
%%% except for \shownote{}, \showDOI{}, and \showURL{}.  The latter two
%%% do not use final punctuation, in order to avoid confusing it with
%%% the Web address.
%%%
%%% To suppress output of a particular field, define its macro to expand
%%% to an empty string, or better, \unskip, like this:
%%%
%%% \newcommand{\showDOI}[1]{\unskip}   % LaTeX syntax
%%%
%%% \def \showDOI #1{\unskip}           % plain TeX syntax
%%%
%%% ====================================================================

\ifx \showCODEN    \undefined \def \showCODEN     #1{\unskip}     \fi
\ifx \showDOI      \undefined \def \showDOI       #1{{\tt DOI:}\penalty0{#1}\ }
  \fi
\ifx \showISBNx    \undefined \def \showISBNx     #1{\unskip}     \fi
\ifx \showISBNxiii \undefined \def \showISBNxiii  #1{\unskip}     \fi
\ifx \showISSN     \undefined \def \showISSN      #1{\unskip}     \fi
\ifx \showLCCN     \undefined \def \showLCCN      #1{\unskip}     \fi
\ifx \shownote     \undefined \def \shownote      #1{#1}          \fi
\ifx \showarticletitle \undefined \def \showarticletitle #1{#1}   \fi
\ifx \showURL      \undefined \def \showURL       #1{#1}          \fi

\bibitem[\protect\citeauthoryear{Babaioff, Lucier, Nisan, and
  Paes~Leme}{Babaioff et~al\mbox{.}}{2014}]%
        {babaioff2014efficiency}
{Moshe Babaioff}, {Brendan Lucier}, {Noam Nisan}, {and} {Renato Paes~Leme}.
  2014.
\newblock \showarticletitle{On the efficiency of the walrasian mechanism}. In
  {\em Proceedings of the fifteenth ACM conference on Economics and
  computation}. ACM, 783--800.
\newblock


\bibitem[\protect\citeauthoryear{Bartal, Gonen, and Nisan}{Bartal
  et~al\mbox{.}}{2003}]%
        {bartal2003incentive}
{Yair Bartal}, {Rica Gonen}, {and} {Noam Nisan}. 2003.
\newblock \showarticletitle{Incentive compatible multi unit combinatorial
  auctions}. In {\em Proceedings of the 9th conference on Theoretical aspects
  of rationality and knowledge}. ACM, 72--87.
\newblock


\bibitem[\protect\citeauthoryear{Berry, Honig, and Vohra}{Berry
  et~al\mbox{.}}{2010}]%
        {berry2010spectrum}
{Randall Berry}, {Michael~L Honig}, {and} {Rakesh Vohra}. 2010.
\newblock \showarticletitle{Spectrum markets: motivation, challenges, and
  implications}.
\newblock {\em Communications Magazine, IEEE\/} {48}, 11 (2010), 146--155.
\newblock


\bibitem[\protect\citeauthoryear{Blumrosen and Nisan}{Blumrosen and
  Nisan}{2007}]%
        {blumrosen2007combinatorial}
{Liad Blumrosen} {and} {Noam Nisan}. 2007.
\newblock \showarticletitle{Combinatorial auctions}.
\newblock {\em Algorithmic game theory\/}  {267} (2007), 300.
\newblock


\bibitem[\protect\citeauthoryear{Bulow, Levin, and Milgrom}{Bulow
  et~al\mbox{.}}{2009}]%
        {bulow2009winning}
{Jeremy Bulow}, {Jonathan Levin}, {and} {Paul Milgrom}. 2009.
\newblock {\em Winning play in spectrum auctions}.
\newblock {T}echnical {R}eport. National Bureau of Economic Research.
\newblock


\bibitem[\protect\citeauthoryear{Cramton}{Cramton}{1997}]%
        {cramton1997fcc}
{Peter Cramton}. 1997.
\newblock \showarticletitle{The FCC spectrum auctions: An early assessment}.
\newblock {\em Journal of Economics \&amp; Management Strategy\/} {6}, 3
  (1997), 431--495.
\newblock


\bibitem[\protect\citeauthoryear{Cramton}{Cramton}{2002}]%
        {cramton2002spectrum}
{Peter Cramton}. 2002.
\newblock \showarticletitle{Spectrum auctions}.
\newblock  (2002).
\newblock


\bibitem[\protect\citeauthoryear{Cramton, Skrzypacz, and Wilson}{Cramton
  et~al\mbox{.}}{2007}]%
        {cramton2007700}
{Peter Cramton}, {Andrzej Skrzypacz}, {and} {Robert Wilson}. 2007.
\newblock \showarticletitle{The 700 MHz spectrum auction: An opportunity to
  protect competition in a consolidating industry}.
\newblock  (2007).
\newblock


\bibitem[\protect\citeauthoryear{Cramton, Shoham, Steinberg,
  et~al\mbox{.}}{Cramton et~al\mbox{.}}{2006}]%
        {cramton2006combinatorial}
{Peter~C Cramton}, {Yoav Shoham}, {Richard Steinberg}, {and} {others}. 2006.
\newblock {\em Combinatorial auctions}. Vol. 475.
\newblock MIT press Cambridge.
\newblock


\bibitem[\protect\citeauthoryear{Diamond, Chu, and Boyd}{Diamond
  et~al\mbox{.}}{2014}]%
        {cvxpy}
{Steven Diamond}, {Eric Chu}, {and} {Stephen Boyd}. 2014.
\newblock {CVXPY}: A {P}ython-Embedded Modeling Language for Convex
  Optimization, version 0.2.
\newblock \url{http://cvxpy.org/}.   (May 2014).
\newblock


\bibitem[\protect\citeauthoryear{Dobzinski and Nisan}{Dobzinski and
  Nisan}{2015}]%
        {dobzinski2015multi}
{Shahar Dobzinski} {and} {Noam Nisan}. 2015.
\newblock \showarticletitle{Multi-unit auctions: beyond Roberts}.
\newblock {\em Journal of Economic Theory\/}  {156} (2015), 14--44.
\newblock


\bibitem[\protect\citeauthoryear{Dobzinski, Nisan, and Schapira}{Dobzinski
  et~al\mbox{.}}{2012}]%
        {dobzinski2012truthful}
{Shahar Dobzinski}, {Noam Nisan}, {and} {Michael Schapira}. 2012.
\newblock \showarticletitle{Truthful randomized mechanisms for combinatorial
  auctions}.
\newblock {\it J. Comput. System Sci.} {78}, 1 (2012), 15--25.
\newblock


\bibitem[\protect\citeauthoryear{Domahidi, Chu, and Boyd}{Domahidi
  et~al\mbox{.}}{2013}]%
        {ecos}
{Alexander Domahidi}, {Eric Chu}, {and} {Stephen Boyd}. 2013.
\newblock \showarticletitle{ECOS: An SOCP solver for embedded systems}. In {\em
  Control Conference (ECC), 2013 European}. IEEE, 3071--3076.
\newblock


\bibitem[\protect\citeauthoryear{Easley and Kleinberg}{Easley and
  Kleinberg}{2010}]%
        {easley2010networks}
{David Easley} {and} {Jon Kleinberg}. 2010.
\newblock {\em Networks, crowds, and markets: Reasoning about a highly
  connected world}.
\newblock Cambridge University Press.
\newblock


\bibitem[\protect\citeauthoryear{Guruswami, Hartline, Karlin, Kempe, Kenyon,
  and McSherry}{Guruswami et~al\mbox{.}}{2005}]%
        {guruswami2005profit}
{Venkatesan Guruswami}, {Jason~D Hartline}, {Anna~R Karlin}, {David Kempe},
  {Claire Kenyon}, {and} {Frank McSherry}. 2005.
\newblock \showarticletitle{On profit-maximizing envy-free pricing}. In {\em
  Proceedings of the sixteenth annual ACM-SIAM symposium on Discrete
  algorithms}. Society for Industrial and Applied Mathematics, 1164--1173.
\newblock


\bibitem[\protect\citeauthoryear{Krishna}{Krishna}{2009}]%
        {krishna2009auction}
{Vijay Krishna}. 2009.
\newblock {\em Auction theory}.
\newblock Academic press.
\newblock


\bibitem[\protect\citeauthoryear{Milgrom}{Milgrom}{1998}]%
        {milgrom1998game}
{Paul Milgrom}. 1998.
\newblock \showarticletitle{Game theory and the spectrum auctions}.
\newblock {\em European Economic Review\/} {42}, 3 (1998), 771--778.
\newblock


\bibitem[\protect\citeauthoryear{Mu'Alem and Nisan}{Mu'Alem and Nisan}{2008}]%
        {mu2008truthful}
{Ahuva Mu'Alem} {and} {Noam Nisan}. 2008.
\newblock \showarticletitle{Truthful approximation mechanisms for restricted
  combinatorial auctions}.
\newblock {\em Games and Economic Behavior\/} {64}, 2 (2008), 612--631.
\newblock


\bibitem[\protect\citeauthoryear{Nguyen, Peivandi, and Vohra}{Nguyen
  et~al\mbox{.}}{2015}]%
        {nguyen2015assignment}
{Th\`{a}nh Nguyen}, {Ahmad Peivandi}, {and} {Rakesh Vohra}. 2015.
\newblock \showarticletitle{Assignment Problems with Complementarities}.
\newblock  (2015).
\newblock


\bibitem[\protect\citeauthoryear{Nguyen and Vohra}{Nguyen and Vohra}{2014}]%
        {nguyen2014near}
{Thanh Nguyen} {and} {Rakesh Vohra}. 2014.
\newblock \showarticletitle{Near Feasible Stable Matchings with
  Complementarities}.
\newblock {\em working paper\/} (2014).
\newblock


\bibitem[\protect\citeauthoryear{Nisan}{Nisan}{2000}]%
        {nisan2000bidding}
{Noam Nisan}. 2000.
\newblock \showarticletitle{Bidding and allocation in combinatorial auctions}.
  In {\em Proceedings of the 2nd ACM conference on Electronic commerce}. ACM,
  1--12.
\newblock


\bibitem[\protect\citeauthoryear{Nisan and Ronen}{Nisan and Ronen}{2007}]%
        {nisan2007computationally}
{Noam Nisan} {and} {Amir Ronen}. 2007.
\newblock \showarticletitle{Computationally Feasible VCG Mechanisms.}
\newblock {\em J. Artif. Intell. Res.(JAIR)\/}  {29} (2007), 19--47.
\newblock


\bibitem[\protect\citeauthoryear{Nisan, Roughgarden, Tardos, and
  Vazirani}{Nisan et~al\mbox{.}}{2007}]%
        {nisan2007algorithmic}
{Noam Nisan}, {Tim Roughgarden}, {Eva Tardos}, {and} {Vijay~V Vazirani}. 2007.
\newblock {\em Algorithmic game theory}. Vol.~1.
\newblock Cambridge University Press Cambridge.
\newblock


\bibitem[\protect\citeauthoryear{Rockafellar and Wets}{Rockafellar and
  Wets}{2009}]%
        {rockafellar2009variational}
{R~Tyrrell Rockafellar} {and} {Roger J-B Wets}. 2009.
\newblock {\em Variational analysis}. Vol. 317.
\newblock Springer Science \& Business Media.
\newblock


\bibitem[\protect\citeauthoryear{Shapiro, Holtz-Eakin, and Bazelon}{Shapiro
  et~al\mbox{.}}{2014}]%
        {shapiro2014economic}
{Robert~J Shapiro}, {Douglas Holtz-Eakin}, {and} {Coleman Bazelon}. 2014.
\newblock \showarticletitle{The economic implications of restricting spectrum
  purchases in the incentive auctions}.
\newblock {\em Available at SSRN\/} (2014).
\newblock


\bibitem[\protect\citeauthoryear{Vohra and Krishnamurthi}{Vohra and
  Krishnamurthi}{2012}]%
        {vohra2012principles}
{Rakesh Vohra} {and} {Lakshman Krishnamurthi}. 2012.
\newblock {\em Principles of Pricing: an analytical approach}.
\newblock Cambridge University Press.
\newblock


\bibitem[\protect\citeauthoryear{Zhou, Berry, Honig, and Vohra}{Zhou
  et~al\mbox{.}}{2013}]%
        {zhou2013complexity}
{Hang Zhou}, {Randall Berry}, {Michael~L Honig}, {and} {Rakesh Vohra}. 2013.
\newblock \showarticletitle{Complexity of allocation problems in spectrum
  markets with interference complementarities}.
\newblock {\em Selected Areas in Communications, IEEE Journal on\/} {31}, 3
  (2013), 489--499.
\newblock


\bibitem[\protect\citeauthoryear{Ziegler}{Ziegler}{1995}]%
        {ziegler1995lectures}
{G{\"u}nter~M Ziegler}. 1995.
\newblock {\em Lectures on polytopes}. Vol. 152.
\newblock Springer Science \& Business Media.
\newblock


\bibitem[\protect\citeauthoryear{Zurel and Nisan}{Zurel and Nisan}{2001}]%
        {zurel2001efficient}
{Edo Zurel} {and} {Noam Nisan}. 2001.
\newblock \showarticletitle{An efficient approximate allocation algorithm for
  combinatorial auctions}. In {\em Proceedings of the 3rd ACM conference on
  Electronic Commerce}. ACM, 125--136.
\newblock


\end{thebibliography}

%%% -*-BibTeX-*-
%%% Do NOT edit. File created by BibTeX with style
%%% ACM-Reference-Format-Journals [18-Jan-2012].

\newpage
% Appendix
\appendix
\section*{APPENDIX}
\setcounter{section}{0}

\section{Proof of Theorem 2.2}
\label{app: A}
Let $x^*$ be one optimal solution for (LIP). Define set $S$ as
\begin{align*}
S=\Big\{&z\in \mathbb{R}^{|N|\times|\mathcal{B}|}~\Big|~x^*_i(B)=0\Rightarrow z_i(B)=0,\\
&\sum_{i\in N}x^*_i(B)=1\Rightarrow \sum_{i\in N}z_i(B)=1,~
z~\mathrm{satisfies~(Supply}+k-1)\Big\}
\end{align*}

\begin{lemma}
\label{lemma:A.1}
For any $z\in S$ that satisfies (Supply), we have $\overline{x}=\mathrm{IR}(z)\in S$.
\end{lemma}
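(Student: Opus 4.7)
The plan is to prove the lemma by induction on the iteration index $\tau$ of the Iterative Rounding procedure, showing that every iterate $z^{(\tau)}$ inherits the first two defining conditions of $S$, namely the support condition $x^*_i(B)=0\Rightarrow z_i(B)=0$ and the tight-demand condition $\sum_{B}x^*_i(B)=1\Rightarrow \sum_{B}z_i(B)=1$. The third condition (satisfying (Supply+$k$-1)) will then follow for the terminal iterate directly from Theorem~\ref{thm: 2.1}, since the hypothesis grants that $z$ satisfies the stronger constraint (Supply).

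For the base case $\tau=0$, $z^{(0)}=z\in S$ by assumption. For the inductive step, the support condition is immediate: if $x^*_i(B)=0$, the inductive hypothesis yields $z^{(\tau)}_i(B)=0\in\{0,1\}$, so by construction of (ULIP) the variable $x_i(B)$ is excluded from $\mathcal{B}_i^{(\tau+1)}$ and fixed at $0$; hence $z^{(\tau+1)}_i(B)=0$. The tight-demand condition is where the modification of IR emphasized in the paper plays its role: $N^{(\tau)}$ is fixed once at $N^{(0)}$, and the inclusion $\{i:\sum_B x^*_i(B)=1\}\subseteq N^{(0)}$ holds because $z\in S$ forces $\sum_B z^{(0)}_i(B)=1$ for such $i$. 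Therefore the equality demand constraint for agent $i$ is maintained in (ULIP) across all iterations and hence preserved by the extreme-point solution $z^{(\tau+1)}$.

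Termination of IR in finitely many rounds (argued just after Algorithm~\ref{alg:ir}) yields an integral output $\overline{x}$ which inherits the first two conditions by the induction and satisfies (Supply+$k$-1) by Theorem~\ref{thm: 2.1}. Therefore $\overline{x}\in S$.

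The main subtle point will be consistency: the equality constraint $\sum_{B\in\mathcal{B}_i^{(\tau+1)}}x_i(B)=1$ for $i\in N^{(0)}$ must remain compatible with any already-rounded integer allocations to agent $i$. This is handled by the invariant $\sum_B z^{(\tau)}_i(B)=1$ for $i\in N^{(0)}$ supplied by the induction, which makes the equality in (ULIP) effectively read $\sum_{B\in\mathcal{B}_i^{(\tau+1)}}x_i(B)=1-\sum_{B\notin\mathcal{B}_i^{(\tau+1)}}z^{(\tau)}_i(B)$ and hence always feasible. This is precisely the step in which the authors' modification $N^{(\tau)}\equiv N^{(0)}$ (as opposed to re-computing the tight-demand set at each round) is essential; without it the extreme-point solution of (ULIP) could drop the equality and destroy the second condition of $S$.
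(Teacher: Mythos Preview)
Your proof is correct and follows the same approach as the paper's own (very terse) argument: both verify that the IR procedure preserves the support condition and the tight-demand condition across iterations, then invoke Theorem~\ref{thm: 2.1} for the (Supply+$k$-1) bound. You simply make explicit the induction on $\tau$ that the paper compresses into the phrase ``it follows from the definition of $S$ as well as the procedure of Iterative Rounding,'' and you correctly identify that the fixed choice $N^{(\tau)}\equiv N^{(0)}$ is what carries the equality demand constraints through every (ULIP).
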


\begin{proof}
If $z\in S$ and $\overline{x}=\mathrm{IR}(z)$, it follows from definition of $S$ as well as the procedure of Iterative Rounding (described in Section 2.2.1) that
\begin{align*}
x^*_i(B)&\Rightarrow z_i(B)=0\Rightarrow\overline{x}_i(B)=0\\
\sum_{i\in N}x^*_i(B)=1&\Rightarrow\sum_{i\in N}z_i(B)=1\Rightarrow \sum_{i\in N}\overline{x}_i(B)=1
\end{align*}

If $z$ satisfies (Supply), from Theorem \ref{thm: 2.1} it follows that $\overline{x}$ satisfies (Supply+$k$-1) (thus we have $\overline{s}_j\leq s_j+k-1$). Therefore, for any $z\in S$ that satisfies (Supply), we have $\overline{x}=\mathrm{IR}(z)\in S$.
\end{proof}

Then we will use induction to prove that $F\subseteq S$ after Lottery Construction procedure.
\begin{itemize}
	
	\item From Lemma \ref{lemma:A.1} we deduct that since $x^*\in S$ and $x^*$ satisfies (Supply), we have $x',x''\in S$. Hence, the initial $F^{(0)}\subseteq S$.
	
	\item Suppose that in the $p$-th iteration of Lottery Construction we have $F^{(p)}\subseteq S$. In this iteration, we have $y^*=\sum_{t=1}^l \lambda_t x^t$,  $\sum_{t=1}^l\lambda_t=1$ where $x^t\in S$ for $t=1,2,\cdots,l$. As
	\begin{align*}
	x^*_i(B)=0&\Rightarrow x^1_i(B)=0, x^2_i(B)=0,\cdots, x^l_i(B)=0\\
	&\Rightarrow y^*_i(B)=\sum_{j=1}^l \lambda_j x^j_i(B)=0\\
	\sum_{i\in N}x^*_i(B)=1&\Rightarrow \sum_{i\in N}x^1_i(B)=1, \sum_{i\in N}x^2_i(B)=1,\cdots,\sum_{i\in N}x^l_i(B)=1\\
	&\Rightarrow \sum_{i\in N}y^*_i(B)=\sum_{t=1}^l \lambda_t \sum_{i\in N}x^j_i(B)=\sum_{t=1}^l \lambda_t=1
	\end{align*}
	
	and
	\begin{align*}
	\sum_{i\in N}\sum_{B\in\mathcal{B}}B_jy^*_i(B)&=\sum_{t=1}^l\lambda_t\left(\sum_{i\in N}\sum_{B\in\mathcal{B}}B_jx_i^j(B)\right)\\
	&\leq \sum_{t=1}^l \lambda_t(s_j+k-1) = (s_j+k-1)\sum_{t=1}^l \lambda_t\\
	&=s_j+k-1
	\end{align*}
	which means that $y^*$ satisfies (Supply+$k$-1). We then have $y^*\in S$.
	
	Then, as $z=x^*+\delta_z\frac{x^*-y^*}{\|x^*-y^*\|}$, we have
	\begin{align*}
	\sum_{i\in N}\sum_{B\in\mathcal{B}}B_jz_i(B)&=\sum_{i\in N}\sum_{B\in\mathcal{B}}B_jx^*_i(B)+\delta_z\sum_{i\in N}\sum_{B\in\mathcal{B}}B_j\frac{x^*_i(B)-y^*_i(B)}{\|x^*-y^*\|}\\
	&\leq \tilde{s}_j+\dfrac{\delta_\epsilon\sum_{i\in N}\sum_{B\in\mathcal{B}}B_j(x^*_i(B)-y^*_i(B))}{\sqrt{\sum_{i\in N}\sum_{B\in\mathcal{B}}B_j^2}\cdot \|x^*-y^*\|}\\
	&=s_j-\epsilon_j+\delta_\epsilon\dfrac{\overline{b}^T (x^*-y^*)}{\|\overline{b}\|\cdot \|x^*-y^*\|}\tag{$\overline{b}^T$ is the $j$-th row vector of $B$}\\
	&\leq s_j-\delta_\epsilon+\delta_\epsilon\tag{Using Cauchy-Schwarz Inequality}\\&=s_j
	\end{align*}
	Thus $z$ satisfies (Supply). And as $y^*\in S$ we have
	\begin{align*}
	x^*_i(B)=0&\Rightarrow x_i^*(B), y^*_i(B)=0\\&\Rightarrow z_i(B)=x^*_i(B)+\delta_z\frac{x^*_i(B)-y^*_i(B)}{\|x^*-y^*\|}=0\\
	\sum_{i\in N}x^*_i(B)=1&\Rightarrow \sum_{i\in N}x^*_i(B)=1, \sum_{i\in N}y^*_i(B)=1\\&\Rightarrow \sum_{i\in N}z_i(B)=\sum_{i\in N}x^*_i(B)+\delta_z\frac{\sum_{i\in N}x^*_i(B)-\sum_{i\in N}y^*_i(B)}{\|x^*-y^*\|}\\
	&~~~~~~~~~~~=1+\delta_z\cdot \frac{1-1}{\|x^*-y^*\|}=1
	\end{align*}
	Thus we have $z\in S$. Therefore by Lemma \ref{lemma:A.1} $\overline{z}=\mathrm{IR}(z;c)\in S$. Therefore, the updated ($p+1$-th iteration's) $F^{(p+1)}=F^{(p)}\cup \{\overline{z}\}$ still satisfies $F^{(p+1)}\subseteq S$.
\end{itemize}

Therefore we conclude that $F\subseteq S$ in all iterations. Thus the final set $F\subseteq S$. This means that, any allocation vector $\overline{x}\in F$ satisfies
\begin{align*}
x^*_i(B)=0&\Rightarrow \overline{x}_i(B)=0,\\
\sum_{i\in N}x^*_i(B)=1&\Rightarrow \sum_{i\in N}\overline{x}_i(B)=1
\end{align*}

From the definition of $\overline{s}_j$ we can see that
\begin{align*}
\sum_{i\in N}\sum_{B\in\mathcal{B}}B_jz_i(B)=\tilde{s}_j\Rightarrow \sum_{i\in N}\sum_{B\in\mathcal{B}}B_j\overline{x}_i(B)=\overline{s}_j
\end{align*}

Denote the coefficient matrix of $x$ in constraints of (LIP) and (MLIP) as $A$ and the upperbound vector as $b$ for (LIP) and $\overline{b}$ for (MLIP). Then (LIP) can be denoted as $\displaystyle\max_{x\geq 0}\{u^Tx: Ax\leq b\}$ and (MLIP) can be denoted as $\displaystyle\max_{x\geq 0}\{u^Tx: Ax\leq \overline{b}\}$. Then the above can be rewritten as
\begin{align*}
(Ax^*-b)_k=0\Rightarrow (A\overline{x}-\overline{b})_k=0
\end{align*}

Since linear programming problems are convex problems, Karush-Kuhn-Tucker conditions are necessary and sufficient for optimality. Since $x^*$ is optimal for (LIP) we have the KKT condition: exist $\pi^*,v^*\geq 0$ such that
\begin{align*}
-u+A^T\pi^*-v^*&=0\\
{\pi^*}^T(Ax^*-b)&=0\\
Ax^*&\leq b\\
{v^*}^Tx^*&=0
\end{align*}

By complementary slackness we have
\begin{align*}
v^*_i(B)&>0\Rightarrow z_i(B)=0\Rightarrow \overline{x}_i(B)=0\\
\pi^*_k&>0\Rightarrow (Ax^*-b)_k=0\Rightarrow (A\overline{x}-\overline{b})_k=0
\end{align*}

Therefore we have
\begin{align*}
{v^*}^T\overline{x}&=0\\
{\pi^*}^T(A\overline{x}-\overline{b})&=0
\end{align*}

And given the definition of $\overline{s}_j$ we can find that $\overline{x}$ is feasible for (MLIP), that is
\begin{align*}
A\overline{x}\leq \overline{b}
\end{align*}

Thus we have KKT conditions also holds for $\overline{x}$ in (MLIP) (and the same Lagrange multipliers apply). Therefore $\overline{x}$ is an optimal solution to (MLIP). 

As a result of the proof, we show that any dual variable of (LIP) is also a dual solution to (MLIP).\qed

\section{Proof of Theorem 2.4}
\label{app: proofsupport}

The dual problem of (MLIP) in Theorem \ref{thm: main} is given by
\begin{align}
\begin{array}{ccll}
&\displaystyle\min_{\alpha,p\geq 0}&\displaystyle\sum_{i\in N}\alpha_i+\displaystyle\sum_{j\in G}\overline{s}_j p_j\\
&\mathrm{s.t.}&\alpha_i\geq w_i(B) u_i(B)-\displaystyle\sum_{j\in G}B_j p_j&\forall i\in N,B\in\mathcal{B}
\end{array}
\tag{DMLIP}
\end{align}
where $\alpha_i,\forall i\in N$ are the Lagrange multipliers associated with demand constraints of (MLIP), and $p_j,\forall j\in G$ are the Lagrange multipliers associated with supply constraints of (MLIP).

As $\overline{x}$ is the optimal solution to the primal problem according to Theorem \ref{thm: main}, we know that the dual problem must also have at least one optimal solution. 

Suppose that $(\overline{\alpha}, \overline{p})$ is an optimal solution for the dual problem, first we must have
\begin{align*}
\overline{\alpha}_i=\max_{B\in\mathcal{B}}\left\{w_i(B)u_i(B)-\sum_{j\in G}B_j\overline{p}_j,0\right\}.
\end{align*}
(If $\overline{\alpha}_i$ is less than the right hand side, then it is not feasible for the dual problem. If $\overline{\alpha}_i$ is larger than the right hand side, then the objective function value is larger)

By complementary slackness condition we have
\begin{align*}
\overline{x}_i(\hat{B})>0~~~&\Rightarrow~~~\overline{\alpha}_i=w_i(\hat{B})u_i(\hat{B})-\sum_{j\in G}\hat{B}_j \overline{p}_j\\
\displaystyle\sum_{i\in N}\overline{x}_i(\hat{B})<1~~~&\Rightarrow~~~\overline{\alpha}_i=0\\
\end{align*}

Then we have
\begin{align*}
&\mathrm{Agent~}i\mathrm{~gets~bundle~}\hat{B}\\\Rightarrow~~~&\overline{\alpha}_i=w_i(\hat{B})u_i(\hat{B})-\sum_{j\in G}\hat{B}_j \overline{p}_j=\max_{B\in\mathcal{B}}\left\{w_i(B)u_i(B)-\sum_{j\in G}B_j\overline{p}_j\right\}\\
\Rightarrow~~~&u_i(\hat{B})-\sum_{j\in G}\hat{B}_j\overline{p}_j\geq u_i(B)-\sum_{j\in G}B_j\overline{p}_j\\&~~~~~~~~~~~~~~~~~~~~~~-[(w_i(\hat{B})-1)u_i(\hat{B})+(1-w_i(B))u_i(B)]~~~\forall B\in \mathcal{B}
\end{align*}
\begin{align*}
&\mathrm{Agent~}i\mathrm{~gets~no~bundle~}\\\Rightarrow~~~&\overline{\alpha}_i=0=\max_{B\in\mathcal{B}}\left\{w_i(B)u_i(B)-\sum_{j\in G}B_j\overline{p}_j,0\right\}\\
\Rightarrow~~~&w_i(B)u_i(B)-\sum_{j\in G}B_j\overline{p}_j\leq 0~~~\forall B\in \mathcal{B}\\
\Rightarrow~~~&u_i(B)-\sum_{j\in G}B_j\overline{p}_j\leq (1-w_i(B))u_i(B)~~~\forall B\in \mathcal{B}
\end{align*}

Define $M_i=\max_{B\in\mathcal{B}}\{u_i(B)\}$. Since $w_i$ is uniformly distributed in $[1-\delta_w, 1+\delta_w]$ we have the following
\begin{align*}
&\mathrm{Agent~}i\mathrm{~gets~bundle~}\hat{B}\\
\Rightarrow~~~&u_i(\hat{B})-\sum_{j\in G}\hat{B}_j\overline{p}_j\geq u_i(B)-\sum_{j\in G}B_j\overline{p}_j-2\delta_w M_i~~~\forall B\in \mathcal{B}\\
&\mathrm{Agent~}i\mathrm{~gets~no~bundle~}\\
\Rightarrow~~~&u_i(B)-\sum_{j\in G}B_j\overline{p}_j\leq \delta_w M_i~~~\forall B\in \mathcal{B}
\end{align*}

By choosing $\delta_w$ to be small enough, we can make sure that $2\delta_w M_i\leq \epsilon_u$ for all $i\in N$.
Then we know that given the bundled price rule
\begin{align*}
\mathrm{Price}(B)=\sum_{j\in G}B_j\overline{p}_j
\end{align*}
the agents would choose bundles that yields the best (within acceptable error) payoff, resulting in the designed allocation $\overline{x}$. Thus, prices that support $\overline{x}$ exists.\qed
\section{Proof of Theorem 2.7}
\label{app: envyfreeproof}
Let $\alpha^*_i(i\in N)$ denote some Lagrange multipliers associated with demand constraints in (LIP) (such that $(\alpha^*,p^*)$ is an optimal solution for the dual of (LIP)), from complimentary slackness we know that
\begin{align*}
\alpha^*_i &= \max_{B\in\mathcal{B}}\left\{w_i(B)u_i(B)-\sum_{j\in G}B_j p_j^*, 0\right\}\\
x^*_i(B)>0&\Rightarrow \alpha_i = w_i(B)u_i(B)-\sum_{j\in G}B_j p_j^*\\&\Rightarrow w_i(B)u_i(B)-\sum_{j\in G}B_jp^*_j= \max_{B\in\mathcal{B}}\left\{w_i(B)u_i(B)-\sum_{j\in G}B_j p_j^*\right\}
\end{align*}
\begin{align*}
\sum_{i\in N}x^*_i(B)<1&\Rightarrow \alpha_i = 0\\&\Rightarrow w_i(B)u_i(B)-\sum_{j\in G}B_j p_j^*\leq 0~\forall B\in\mathcal{B}
\end{align*}

Define $M_i=\max_{B\in\mathcal{B}}\{u_i(B)\}$. Consider two cases
\begin{itemize}
	\item $\displaystyle\sum_{i\in N}x_i^*(B)=1$
	
	In this case we have
	\begin{align*}
	\sum_{B\in\mathcal{B}} [w_i(B)u_i(B)-P(B)]x_i^*(B)&=\sum_{B\in\mathcal{B}}\max_{B\in\mathcal{B}}\left\{w_i(B)u_i(B)-\sum_{j\in G}B_j p_j^*\right\}x_i^*(B)\\
	&=\max_{B\in\mathcal{B}}\left\{w_i(B)u_i(B)-P(B)\right\}\cdot \sum_{B\in\mathcal{B}}x_i^*(B)\\
	&=\max_{B\in\mathcal{B}}\left\{w_i(B)u_i(B)-P(B)\right\}\cdot 1\\
	&\geq \max_{B\in\mathcal{B}}\left\{w_i(B)u_i(B)-P(B)\right\}\cdot\sum_{B\in\mathcal{B}}x_k^*(B)\\
	&=\sum_{B\in\mathcal{B}}\max_{B\in\mathcal{B}}\left\{w_i(B)u_i(B)-P(B)\right\}x_k^*(B)\\
	&\geq \sum_{B\in\mathcal{B}} [w_i(B)u_i(B)-P(B)]x_k^*(B)
	\end{align*}
	for all $i,k\in N$.
	
	As $w_i\geq 1-\delta_w$, we then have
	\begin{align*}
	\sum_{B\in\mathcal{B}}[u_i(B)-P(B)]x_i^*(B)&\geq \sum_{B\in\mathcal{B}} [u_i(B)-P(B)]x_k^*(B)\\&-\sum_{B\in\mathcal{B}}(1-w_i(B))u_i(B)[x^*_i(B)-x^*_k(B)]\\
	&\geq \sum_{B\in\mathcal{B}} [u_i(B)-P(B)]x_k^*(B)-\sum_{B\in\mathcal{B}}\delta_w M_i[x^*_i(B)-x^*_k(B)]\\
	&\geq \sum_{B\in\mathcal{B}} [u_i(B)-P(B)]x_k^*(B)-\delta_w M_i
	\end{align*}
	for all $i,k\in N$, and by choosing $\delta_w$ small enough we can ensure that $\delta_wM_i\leq \epsilon_u$ for all $i\in N$.
	\item $\displaystyle\sum_{i\in N}x_i^*(B)<1$
	
	In this case we have $x_i^*(B)>0\Rightarrow w_iu_i(B)-P(B)=0$ 
	
	and $\sum_{i\in N}x_i^*(B)<1\Rightarrow w_i(B)u_i(B)-P(B)\leq 0~\forall B\in\mathcal{B}$. Thus
	\begin{align*}
	\sum_{B\in\mathcal{B}}[w_i(B)u_i(B)-P(B)]x_i^*(B)&=0\geq \sum_{B\in\mathcal{B}}[w_i(B)u_i(B)-P(B)]x_k^*(B)
	\end{align*}
	for all $i,k\in N$.
	
	As $w_i\geq 1-\delta_w$ we have
	\begin{align*}
	\sum_{B\in\mathcal{B}}[u_i(B)-P(B)]x_i^*(B)&\geq \sum_{B\in\mathcal{B}} [u_i(B)-P(B)]x_k^*(B)\\&-\sum_{B\in\mathcal{B}}(1-w_i(B))u_i(B)[x^*_i(B)-x^*_k(B)]\\
	&\geq \sum_{B\in\mathcal{B}} [u_i(B)-P(B)]x_k^*(B)-\delta_w M_i
	\end{align*}
\end{itemize}
for all $i,k\in N$, and by choosing $\delta_w$ small enough we can ensure that $\delta_wM_i\leq \epsilon_u$ for all $i\in N$.

In all, we have the approximate envy-freeness condition holds.\qed
\section{Proof of Polynomial Time}
\label{app: B}

To prove that the algorithm is polynomial time, we first calculate the number of variables in (LIP).

Fix $k$, the number of possible bundles in set $\mathcal{B}$ is given by
\begin{align*}
|\mathcal{B}|=\sum_{m=1}^k \begin{pmatrix}
|G|+m-1\\m
\end{pmatrix}\sim O(|G|^k)
\end{align*}

Therefore the number of variables in (LIP) is given by
\begin{align*}
\mathrm{Nvar} = |N|\cdot |\mathcal{B}|\sim O(|N|\cdot|G|^k)
\end{align*}

The number of constraints in (LIP) is $|N|+|G|$. Since linear programming problem can be solved (average case) in polynomial time with regard to number of variables and number of constraints, the first step of the algorithm can be finished in polynomial time.

For the iterative rounding process, in each iteration, either at least one variable is eliminated or at least one constraint is eliminated. So the number of rounds can not be larger than sum of the number of constraints and the number of variables.

In each round, a linear programming is performed, and it can be solved in polynomial time since its number of constraints and variables is polynomial. Therefore we conclude that the iterative rounding process is also polynomial time.

As the authors of \cite{nguyen2015assignment} proved that the Lottery Construction procedure can finish in polynomial time, we conclude that the whole algorithm can be solved in polynomial time with regard to $|N|$ and $|G|$.\qed

\section{Proof of Theorem 3.2}
\label{app: C}

Given the types of the agents, (LIP) is transformed to
\begin{align*}
\begin{array}{ccll}
\mathrm{(LIP_\Theta)}&\displaystyle\max_{x\geq 0}&\displaystyle\sum_{\theta\in \Theta}\sum_{B\in\mathcal{B}}w_\theta(B) u_\theta(B)y_\theta(B)&\\
&\mathrm{s.t.}&\displaystyle\sum_{B\in\mathcal{B}}\dfrac{1}{n_\theta}y_\theta(B)\leq 1&\theta\in \Theta\\
&&\displaystyle\sum_{\theta\in \Theta}\sum_{B\in \mathcal{B}}B_jy_\theta(B)\leq \tilde{s}_j&j\in G
\end{array}
\end{align*}

After figuring out an optimal solution $x^*$ for (LIP$_\Theta$), individual shares are given by $x^*_i(B)=y^*_{\beta_i}(B)/n_{\beta_i}$. In this sense (LIP$_\Theta$) is equivalent to (LIP) since the symmetrical variables (shares of bundles of agents with same types) in (LIP) can have symmetrical optimal solutions. Therefore we will consider only the solutions of (LIP$_\Theta$) in the following part.

The dual of (LIP$_\Theta$) is given by
\begin{align*}
\begin{array}{ccll}
\mathrm{(DLIP_\Theta)}&\displaystyle\min_{\alpha,p\geq 0}&\displaystyle\sum_{\theta\in \Theta}\alpha_\theta+\displaystyle\sum_{j\in G}\tilde{s}_j p_j\\
&\mathrm{s.t.}&\dfrac{1}{n_\theta}\alpha_\theta\geq w_\theta(B) u_\theta(B)-\displaystyle\sum_{j\in G}B_j p_j&\theta\in \Theta,B\in\mathcal{B}\\
%&&\alpha_\theta\leq \displaystyle\max_{B\in\mathcal{B}}\{w_\theta(B)u_\theta(B)\}&\theta\in \Theta
&&\alpha_\theta\leq M&\theta\in \Theta
\end{array}
\end{align*}
where $M$ is the optimal function value for (LIP$_\infty$), which we will define later.

Note that the last constraint in the above problem is an additional constraint to ensure that the feasible region for the problem is bounded. Adding or deleting this constraint would not affect the optimal solution for the problem, as $M$ is greater or equal to the optimal solution of all (LIP$_\Theta$), and there is no duality gap for linear programming problems, we have $M\geq \sum_{\theta\in\Theta}\alpha_\theta+\sum_{j\in G}s_j'p_j\geq \alpha_\theta$ holds for $\theta\in\Theta$ even if we delete the third constraint. Thus this additional constraint is actually always true.

Consider an agent with type $\xi$ who misreports his type as $\zeta$ while others truthfully report their types, then we have the number of agent reporting each type to be
\begin{align*}
n_{\xi}'&=n_{\xi}-1,~~~n_{\zeta}'=n_{\zeta}+1,\\
n_{\theta}'&=n_{\theta}~~~\theta\in\Theta\backslash\{\xi,\zeta\} 
\end{align*}

Then (LIP$_\Theta'$) in this case is transformed to
\begin{align*}
\begin{array}{ccll}
\mathrm{(LIP_\Theta')}&\displaystyle\max_{x\geq 0}&\displaystyle\sum_{\theta\in \Theta}\sum_{B\in\mathcal{B}}w_\theta(B) u_\theta(B)y_\theta(B)&\\
&\mathrm{s.t.}&\displaystyle\sum_{B\in\mathcal{B}}\dfrac{1}{n_\theta'}y_{\theta}(B)\leq 1&\theta\in \Theta\\
&&\displaystyle\sum_{\theta\in \Theta}\sum_{B\in \mathcal{B}}B_jy_\theta(B)\leq \tilde{s}_j&j\in G
\end{array}
\end{align*}

And the dual of (LIP$_\Theta'$) is given by
\begin{align*}
\begin{array}{ccll}
\mathrm{(DLIP_\Theta')}&\displaystyle\min_{\alpha,p\geq 0}&\displaystyle\sum_{\theta\in \Theta}\alpha_\theta+\displaystyle\sum_{j\in G}\tilde{s}_j p_j\\
&\mathrm{s.t.}&\dfrac{1}{n_\theta'}\alpha_\theta\geq w_\theta(B) u_\theta(B)-\displaystyle\sum_{j\in G}B_j p_j&\theta\in \Theta,B\in\mathcal{B}\\
%&&\dfrac{1}{n_\theta'}\alpha_\theta\leq \displaystyle\max_{B\in\mathcal{B}}\{w_\theta(B)u_\theta(B)\}&\theta\in \Theta
&&\alpha_\theta\leq M&\theta\in \Theta
\end{array}
\end{align*}

When $n_\theta\rightarrow\infty$ for all $\theta\in\Theta$, we have $n_\theta'\rightarrow\infty$ for all $\theta\in\Theta$ and both (LIP$_\Theta$) and (LIP$_\Theta'$) converges to the following problem
\begin{align*}
\begin{array}{ccll}
\mathrm{(LIP_\infty)}&\displaystyle\max_{x\geq 0}&\displaystyle\sum_{\theta\in \Theta}\sum_{B\in\mathcal{B}}w_\theta(B) u_\theta(B)y_\theta(B)&\\
% &&\displaystyle\sum_{B\in\mathcal{B}}\dfrac{1}{n_\theta}y_i(B)\leq 1&i\in N\\
&\mathrm{s.t.}&\displaystyle\sum_{\theta\in \Theta}\sum_{B\in \mathcal{B}}B_jy_\theta(B)\leq \tilde{s}_j&j\in G
\end{array}
\end{align*}

And both (DLIP$_\Theta$) and (DLIP$_\Theta'$) converge to
\begin{align*}
\begin{array}{ccll}
\mathrm{(DLIP_\infty)}&\displaystyle\min_{\alpha,p\geq 0}&\displaystyle\sum_{\theta\in\Theta}\alpha_\theta+\displaystyle\sum_{j\in G}\tilde{s}_j p_j\\
&\mathrm{s.t.}&0\cdot \alpha_\theta\geq w_\theta(B) u_\theta(B)-\displaystyle\sum_{j\in G}B_j p_j&\theta\in \Theta,B\in\mathcal{B}\\
%&&0\cdot\alpha_\theta\leq \displaystyle\max_{B\in\mathcal{B}}\{w_\theta(B)u_\theta(B)\}&\theta\in \Theta
&&\alpha_\theta\leq M&\theta\in \Theta
\end{array}
\end{align*}
which is equivalent to the dual of (LIP$_\infty$), i.e.
\begin{align*}
\begin{array}{ccll}
\mathrm{(DLIP_\infty)}&\displaystyle\min_{\alpha,p\geq 0}&\displaystyle\sum_{j\in G}\tilde{s}_j p_j\\
&\mathrm{s.t.}&0\geq w_\theta(B) u_\theta(B)-\displaystyle\sum_{j\in G}B_j p_j&\theta\in \Theta,B\in\mathcal{B}
\end{array}
\end{align*}
\iffalse 

Note that $\alpha_i$ is reduced in (DLIP$_\infty$). One way to think how (DLIP$_\Theta$) or (DLIP$_\Theta'$) (with $\alpha, p$ as variables) converges to (DLIP$_\infty$) (with only $p$ as variables) is to consider (DLIP$_\Theta$) or (DLIP$_\Theta'$) as non-linear problems of $p$ (but not of $\alpha$), since we can see that $\alpha_\theta=n_\theta\max_{B\in\mathcal{B}}(w_\theta(B)u_\theta(B)-\sum_{j\in G}B_jp_j)$ (or $\alpha_\theta=n_\theta'\max_{B\in\mathcal{B}}(w_\theta(B)u_\theta(B)-\sum_{j\in G}B_jp_j)$) for $\theta\in \Theta$ for any solution of (DLIP$_\Theta$) or (DLIP$_\Theta'$).\fi

We have the following lemmas.\\

\begin{lemma}
\label{lemma:C.1}
With probability 1, (LIP$_\infty$) and (DLIP$_\infty$) both have a unique solution.
\end{lemma}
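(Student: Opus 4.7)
The plan is to exploit the absolute continuity of the distributions of $w$ and $\epsilon$ to show that non-uniqueness of either the primal or the dual optimum is a Lebesgue-null event. The main ingredient is the classical fact that a linear program of the form $\max c^T x$ over $P=\{x\geq 0:Ax\leq b\}$ has a unique optimal solution precisely when $c$ is not orthogonal to any edge direction or extreme-ray direction of $P$. Each such direction spans the one-dimensional kernel of some full-rank $(n-1)\times n$ submatrix of the combined constraint matrix $[A;-I]$, so the list of candidate directions is finite and depends only on the matrix, never on the right-hand side.

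First I would apply this to (LIP$_\infty$). The constraint coefficients $B_j$ are deterministic, the right-hand sides $\tilde{s}_j=s_j-\epsilon_j$ are random through $\epsilon$, and the objective coefficients $c_{\theta,B}=w_\theta(B)u_\theta(B)$ are random through $w$ on those coordinates for which $u_\theta(B)\neq 0$; coordinates with $u_\theta(B)=0$ contribute nothing to the objective and may be eliminated. Conditioning on $\epsilon$ fixes the feasible polytope, so the set of objective vectors $c$ giving a non-unique maximum is contained in a finite union of hyperplanes in $\mathbb{R}^{|\Theta|\cdot|\mathcal{B}|}$, hence has Lebesgue measure zero. Because the conditional distribution of $c$ given $\epsilon$ is absolutely continuous (as $w$ is), the conditional probability of non-uniqueness is zero, and Fubini's theorem then yields unconditional probability zero of non-uniqueness for (LIP$_\infty$).

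Next I would apply the same reasoning to (DLIP$_\infty$) with the roles of $w$ and $\epsilon$ reversed. The dual feasible polyhedron $D(w)=\{p\geq 0:\sum_j B_j p_j\geq w_\theta(B)u_\theta(B)\ \forall(\theta,B)\}$ has edge and extreme-ray directions determined solely by the matrix of $B_j$, not by $w$. Hence, for every fixed $w$, the set of objective vectors $\tilde{s}\in\mathbb{R}^{|G|}$ producing a non-unique dual minimum is contained in a finite union of hyperplanes. Since $\tilde{s}=s-\epsilon$ is absolutely continuous, this conditional probability is zero; a second application of Fubini gives unconditional probability zero, and a union bound over the two null events completes the argument.

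The main technical obstacle will be establishing cleanly that the candidate edge and extreme-ray directions depend only on the constraint matrix: one must enumerate the full-rank bases of the combined system $[A;-I]$, check that each one-dimensional kernel is a potential direction, and verify that extreme rays of the unbounded dual polyhedron are included in this count so that orthogonality to a ray is also ruled out. Once this enumeration is in place the measure-zero argument is standard, and the role of the random perturbations $w$ and $\epsilon$ is simply to guarantee the required absolute continuity of, respectively, the objective vector of (LIP$_\infty$) and the objective vector of (DLIP$_\infty$).
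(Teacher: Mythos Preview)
Your proposal is correct and follows essentially the same route as the paper: for (LIP$_\infty$) you condition on $\epsilon$ (which fixes the feasible region) and use the absolute continuity of $w$ on the objective side, and for (DLIP$_\infty$) you swap the roles, conditioning on $w$ and using the absolute continuity of $\tilde{s}$. The paper phrases the non-uniqueness event as two extreme points having equal objective value rather than as orthogonality of $c$ to an edge direction, but this is the same condition; your edge-direction formulation is in fact slightly cleaner, since the finite list of candidate directions depends only on the constraint matrix and not on the random right-hand side, and it also handles the unbounded dual polyhedron (which the paper glosses over) via the extreme-ray directions you mention.
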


\begin{proof}
See Appendix \ref{app: proof1}.
\end{proof}

\begin{lemma}
\label{lemma:C.2}
For a sequence of optimization problems $P^{(n)}$ that have common continuous objective function, and the feasible regions $R^{(n)}$ are compact, converges, and satisfies the condition that $\bigcup_{n=1}^\infty R^{(n)}$ is bounded. Let $P*$ be an optimization problem with the same objective function as all $P^{(n)}$ and feasible region $R^*:=\displaystyle\lim_{n\rightarrow\infty} R^{(n)}$. If $P^*$ has a \emph{unique} optimal solution $y^*$, then any sequence $y^{(n)}$ of solutions of $P^{(n)}$ must converge to $y^*$.
\end{lemma}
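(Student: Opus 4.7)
The plan is to argue by a subsequence/compactness method that exploits the uniqueness of the optimum of $P^*$. Since $\bigcup_{n\geq 1} R^{(n)}$ is bounded, the sequence $\{y^{(n)}\}$ lives in a bounded subset of Euclidean space. By Bolzano--Weierstrass, any subsequence of $\{y^{(n)}\}$ admits a further convergent sub-subsequence. To conclude $y^{(n)}\to y^*$, it therefore suffices to show that every convergent subsequence has limit $y^*$.

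So fix an arbitrary convergent subsequence $y^{(n_k)}\to \bar y$. The first step is to place $\bar y$ in the feasible region $R^*$. Since $y^{(n_k)}\in R^{(n_k)}$ and $R^{(n)}\to R^*$ in the Painlev\'e--Kuratowski sense (which is the natural notion of set convergence in this context, cf.\ \cite{rockafellar2009variational}), $\bar y$ lies in the outer limit of the sequence $\{R^{(n)}\}$, and the outer limit equals $R^*$ by hypothesis. Hence $\bar y\in R^*$.

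The second step, which I expect to be the main obstacle, is to show that $\bar y$ is actually optimal for $P^*$. Let $f$ denote the common continuous objective and let $z\in R^*$ be arbitrary. Here we use the other half of set convergence: $R^*$ is contained in the inner limit of $\{R^{(n)}\}$, so there exists a sequence $z^{(n)}\in R^{(n)}$ with $z^{(n)}\to z$. Since $y^{(n)}$ is optimal for $P^{(n)}$, we have $f(y^{(n)})\geq f(z^{(n)})$ for every $n$. Passing to the limit along the subsequence $n_k$ and using continuity of $f$, we obtain $f(\bar y)\geq f(z)$. Since $z\in R^*$ was arbitrary, $\bar y$ is an optimum of $P^*$.

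Finally, by the uniqueness assumption on $P^*$, we conclude $\bar y=y^*$. Thus every convergent subsequence of $\{y^{(n)}\}$ has the same limit $y^*$, which, combined with the boundedness-based compactness argument in the first paragraph, implies $y^{(n)}\to y^*$. The delicate point throughout is the use of both halves of set convergence: the outer limit to keep cluster points feasible, and the inner limit to approximate arbitrary feasible points of $P^*$ from within $R^{(n)}$ so that optimality can be transferred to the limit.
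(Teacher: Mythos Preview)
Your argument is correct and follows the same overall compactness-plus-uniqueness template as the paper: show that every cluster point of $\{y^{(n)}\}$ lies in $R^*$ (via the outer limit) and is optimal for $P^*$, then invoke uniqueness. The one genuine difference is in how optimality of the cluster point is obtained. The paper first appeals to Berge's Maximum Theorem to get convergence of optimal values, $f(y^{(n)})\to f(y^*)$, and then concludes that any cluster point $\tilde y$ satisfies $f(\tilde y)=f(y^*)$. You instead bypass Berge and prove optimality directly, using the inner-limit half of set convergence to approximate an arbitrary $z\in R^*$ by $z^{(n)}\in R^{(n)}$ and pass the inequality $f(y^{(n)})\geq f(z^{(n)})$ to the limit. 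Your route is more self-contained and makes the role of both halves of Painlev\'e--Kuratowski convergence explicit; the paper's route is terser by outsourcing the value-convergence step to a named theorem. Either way the logical skeleton is the same.
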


\begin{proof}
See Appendix \ref{app: proof2}.
\end{proof}

Let $\mathbf{n}^t=(n_{\theta_1}^t, n_{\theta_2}^t, \cdots, n_{\theta_m}^t)$ be any sequence of vectors such that $\mathbf{n}^t\rightarrow\infty$ as $t\rightarrow\infty$. Denote the corresponding sequence of problems as (LIP$_\Theta$)$^t$, (LIP$_\Theta'$)$^t$, (DLIP$_\Theta$)$^t$, and (DLIP$_\Theta'$)$^t$ respectively. And solution sequences for the problem sequences are $y_\Theta^t$, $y_\Theta'^t$, $p_\Theta^t$, $p_\Theta'^t$ respectively. 

We know that (LIP$_\infty$) has a unique solution with probability 1. Since the union of feasible regions of (LIP$_\Theta$)$^t$ and (LIP$_\Theta'$)$^t$ is a subset of feasible region of (LIP$_\infty$), which is bounded, and the objective function are the same, we can apply Lemma C.2 and conclude that with probability 1, $y_\Theta^t$ and $y_\Theta'^t$ both converge to $y_\Theta^*$, a solution of (LIP$_\infty$).

Similarly, we know that (DLIP$_\infty$) has a unique solution with probability 1. Since the union of feasible regions of (DLIP$_\Theta$)$^t$ and (DLIP$_\Theta'$)$^t$ is a subset of feasible region of (DLIP$_\Theta~|~n_\theta=1\forall\theta\in\Theta$), which is bounded (with the additional constraint), and the objective functions are the same, we can apply Lemma C.2 and conclude that with probability 1, $p_\Theta^t$ and $p_\Theta'^t$ both converge to $p_\Theta^*$, a solution of (DLIP$_\infty$).

Then, apply Theorem \ref{thm: envyfree} to (LIP$_\Theta'$)$^t$ we have
\begin{align*}
&\dfrac{1}{n_{\xi}'^t}\sum_{B\in\mathcal{B}}\left[u_\xi(B)-\sum_{j\in G}B_jp^*_j\right]y^{**(t)}_\xi(B)\geq \dfrac{1}{n_{\zeta}'^t}\sum_{B\in\mathcal{B}}\left[u_\xi(B)-\sum_{j\in G}B_jp^*_j\right]y^{**(t)}_\zeta(B)-\epsilon_u\tag{I}
\end{align*}

where $y^{**(t)}$ is an optimal solution of (LIP$_\Theta'$)$^t$. Denote $y^{*(t)}$ as a solution for the corresponding original problem (LIP$_\Theta$)$^t$. We have $y^{*(t)}$ and $y^{**(t)}$ converge to the same limit as $n_\theta\rightarrow\infty$ (with probability 1). Thus we have $\|y^{*(t)}-y^{**(t)}\|\rightarrow 0$ as $n_\theta\rightarrow\infty$ with probability 1. Denote the corresponding price vector as $p^{*(t)}$ and $p^{**(t)}$. As $p^{*(t)}$ and $p^{**(t)}$ converge to the same limit, we also have $\|p^{*(t)}-p^{**(t)}\|\rightarrow 0$ as $n_\theta\rightarrow\infty$ with probability 1. Therefore we have for any $\varepsilon_0>0$ there exist $N_0$ such that when $n_\theta^t>N_0$ for all $\theta\in\Theta$ we have
\begin{align*}
&\dfrac{1}{n_{\xi}^t}\sum_{B\in\mathcal{B}}\left[u_\xi(B)-\sum_{j\in G}B_jp^{*(t)}_j\right]y^{*(t)}_\xi(B)=\dfrac{1}{n_{\xi}'^t+1}\sum_{B\in\mathcal{B}}\left[u_\xi(B)-\sum_{j\in G}B_jp^{*(t)}_j\right]y^{*(t)}_\xi(B)\\=~& \dfrac{1}{n_{\xi}'^t}\sum_{B\in\mathcal{B}}\left[u_\xi(B)-\sum_{j\in G}B_jp^{*(t)}_j\right]y^{*(t)}_\xi(B)-\dfrac{1}{n_{\xi}'^t(n_{\xi}'^t+1)}\sum_{B\in\mathcal{B}}\left[u_\xi(B)-\sum_{j\in G}B_jp^{*(t)}_j\right]y^{*(t)}_\xi(B)\\
\geq~&\dfrac{1}{n_{\xi}'^t}\left(\sum_{B\in\mathcal{B}}\left[u_\xi(B)-\sum_{j\in G}B_jp^{**(t)}_j\right]y^{**(t)}_\xi(B)-\dfrac{1}{2}\varepsilon_0\right)-o((n_\xi^t)^{-1})\\
\geq ~&\dfrac{1}{n_{\xi}'^t}\sum_{B\in\mathcal{B}}\left[u_\xi(B)-\sum_{j\in G}B_jp^{**(t)}_j\right]y^{**(t)}_\xi(B)-\varepsilon_0\tag{II}
\end{align*}
with probability 1.

Combining (I)(II) we get for any $\varepsilon_0>0$ there exist $N_0$ such that when $n_\theta>N_0$ for all $\theta\in\Theta$ we have
\begin{align*}
&\dfrac{1}{n_{\xi}}\sum_{B\in\mathcal{B}}\left[u_\xi(B)-\sum_{j\in G}B_jp^*_j\right]y^{*}_\xi(B)\geq \dfrac{1}{n_{\zeta}'}\sum_{B\in\mathcal{B}}\left[u_\xi(B)-\sum_{j\in G}B_jp^{**}_j\right]y^{**}_\zeta(B)-\epsilon_u-\varepsilon_0
\end{align*}
with probability 1. 

Since the final allocation of POPT is implemented as a lottery, where each agent of type $\theta$ has allocation vector with expectation $\frac{1}{n_\theta}y^*_\theta$ for (LIP$_\Theta$) (or $\frac{1}{n_\theta'}y^{**}_\theta$, for (LIP$_\Theta'$)). Denote the final overall allocation vector as $\overline{x}$ for (LIP) and $\tilde{x}$ for (LIP'). Then we have for any $\varepsilon_0>0$ there exist $N_0$ such that when $n_\theta>N_0$ for all $\theta\in\Theta$ such that
\begin{align*}
&\mathbb{E}\left\{\sum_{B\in\mathcal{B}}\left[u_{\beta_i}(B)-\sum_{j\in G}B_jp^*_j\right]\overline{x}_{\beta_i}(B)\right\}\geq \mathbb{E}\left\{\sum_{B\in\mathcal{B}}\left[u_{\beta_i}(B)-\sum_{j\in G}B_jp^{**}_j\right]\tilde{x}_\zeta(B)\right\}-\epsilon_u-\varepsilon_0
\end{align*}

Proving the result.\qed
\iffalse
As we know that both (LIP$_\Theta$) and (LIP$_\Theta'$) converges to (LIP$_\infty$) as $n_\theta\rightarrow\infty$, the optimal objective function value of (LIP$_\Theta$) and (LIP$_\Theta'$) both converges. Since with probability 1, we have (LIP$_\infty$) to have a unique utility for each type for all solutions, the solutions of (LIP$_\Theta$) and (LIP$_\Theta'$) converges to the solution of (LIP$_\infty$) with probability 1.
\fi

\section{Proof of Lemma D.1}
\label{app: proof1}

In (LIP$_\infty$) we have two set of parameters controlled by random variables: $w_i(B)\sim\mathcal{U}(1-\delta_w,1+\delta_w)$, and $\tilde{s}_j\sim\mathcal{U}(s_j-2\delta_\epsilon, s_j-\delta_\epsilon)$. All of the random variables are independent. 

The constraints of (LIP$_\infty$) defines a (bounded) polytope with finite number of extreme points. Denote $\{q^1, q^2,\cdots, q^m\}$ as the set of extreme points, we can see that $q^1,q^2,\cdots,q^m$ are random variables that depend only on $s_j'$ for $j\in G$.

Therefore, $q^1, q^2,\cdots, q^m$ are independent of $w_i(B)$ for $i\in N,B\in\mathcal{B}$. Define $v^1, v^2,\cdots, v^m$ such that $v^t_i(B)=q^t_i(B)u_i(B)$ for $t=1,2,\cdots,m$. As $q_s\neq q_t$ for $s\neq t$ and $u_i(B)>0$, we have $v_s\neq v_t$ for $s\neq t$. Then we can express the objective function value at $q^t$ as $w^Tq^t$.

Then we have for any $s\neq t$, $s,t=1,2,\cdots,m$,
\begin{align*}
\Pr[w^Tq^s=w^Tq^t]=\Pr[w^T(q^s-q^t)=0]
\end{align*}

Since $(q^s-q^t)$ is a non-zero vector which is independent of $w$, $w^T(q^s-q^t)$ is a continuous random variable. Then the probability that this random variable has a specific value is 0. Thus
\begin{align*}
\Pr[w^Tq^s=w^Tq^t]=0
\end{align*}

It follows that
\begin{align*}
\Pr[w^Tq^s\neq w^Tq^t~\forall s\neq t]=1
\end{align*}

A lemma in \cite{ziegler1995lectures} states that a sufficient condition for the uniqueness of a solution for a linear programming problem is that $w^Tq^s\neq w^Tq^t$ for all $t\neq s$, $s,t=1,2,\cdots, m$. Therefore, with probability 1, we have (LIP$_\infty$) to have a unique solution.\qed

Similarly, in (DLIP$_\infty$) we have a set of parameters controlled by random variables: $\tilde{s}_j\sim\mathcal{U}(s_j-2\delta_\epsilon, s_j-\delta_\epsilon)$. All of the random variables are independent. The constraints defines a bounded polytope with finite number of extreme points. The coordinates of the extreme points is a constant (hence independent of $\tilde{s}_j$). Then, follow a similar proof for (DLIP$_\infty$), we conclude that (DLIP$_\infty$) has a unique solution with probability 1.

Therefore, with probability 1, (LIP$_\infty$) and (DLIP$_\infty$) both have a unique solution.\qed

\section{Proof of Lemma D.2}
\label{app: proof2}

Denote the objective function of $P^{(n)}$ and $P^*$ as $f$, and a sequence of solution of $P^{(n)}$ as $y^{(n)}$. Denote the unique solution of $P^*$ as $y^*$

For the sequence of optimization problems $P^{(n)}$ and $P^*$, if $P^*$ has solutions, then by Berge's Maximum Theorem the optimal objective function values of $P^{(n)}$ converges to the optimal objective function value of $P^*$. That is $
f(y^{(n)})\rightarrow f(y^*)
$ as $n\rightarrow\infty$.

Suppose that $y^{(n)}$ does not converge to $y^*$, which means that there exists some $\varepsilon>0$ such that there exists a subsequence $y^{(n_k)}$ satisfies
\begin{align*}
\|y^{(n_k)}-y^*\|>\varepsilon
\end{align*}

However, $y^{(n_k)}$ is bounded. By Bolzano-Weierstrass Theorem, there exist a subsequence $y^{(n_{k_t})}$ of $y^{(n_k)}$ such that $y^{(n_{k_t})}$ converges to some point $\tilde{y}$. By the definition of set convergence \cite{rockafellar2009variational} the feasible regions $R^{(n_{k_t})}$ converges to the feasible region $R^*$ of $P^*$. $$\lim_{t\rightarrow\infty}R^{(n_{k_t})}=\limsup_{t\rightarrow\infty} R^{(n_{k_t})}=R^*$$ By definition of limits of sets \cite{rockafellar2009variational}, we have $\tilde{y}\in\limsup_{t\rightarrow\infty} R^{(n_{k_t})}$ thus $\tilde{y}\in R^*$. Since the objective function is continuous, we have $f(y^{(n_{k_t})})$ converges to $f(\tilde{y})$.

Since $y^{(n_{k_t})}$ is a subsequence of $y^{(n)}$ and $f(y^{(n)})$ converges to $f(y^*)$, we have $f(y^{(n_{k_t})})$ converges to $f(y^*)$. Therefore $f(\tilde{y})=f(y^*)$, which means that $\tilde{y}$ is also an optimal solution of $P^*$. 

However $y^*\neq \tilde{y}$, as $\|\tilde{y}-y^*\|=\displaystyle\lim_{t\rightarrow\infty}\|y^{(n_{k_t})}-y^*\|\geq \varepsilon>0$. Thus $\tilde{y}$ is an optimal solution different from $y^*$. This contradicts with the condition that $y^*$ is the unique solution for $P^*$. 

Therefore $y^{(n)}$ must converge to $y^*$.\qed

\end{document}